\newtheorem{lemma}{Lemma}
\newtheorem*{remark}{Remark}
\crefname{section}{§}{§§}
\Crefname{section}{§}{§§}
\crefname{paragraph}{§}{§§}
\Crefname{paragraph}{§}{§§}
\crefname{appendix}{Appendix}{Appendices}
\Crefname{appendix}{Appendix}{Appendices}
\crefname{figure}{Figure}{Figures}
\Crefname{figure}{Figure}{Figures}
\crefname{equation}{Equation}{Equations}
\Crefname{equation}{Equation}{Equations}
\crefname{table}{Table}{Tables}
\Crefname{table}{Table}{Tables}
\crefname{lemma}{Lemma}{Lemmas}
\Crefname{lemma}{Lemma}{Lemmas}
\crefname{theorem}{Theorem}{Theorems}
\Crefname{theorem}{Theorem}{Theorems}
\crefname{problem}{Problem}{Problems}
\Crefname{problem}{Problem}{Problems}
\crefname{algorithm}{Algorithm}{Algorithms}
\Crefname{algorithm}{Algorithm}{Algorithms}
\crefname{algocf}{Algorithm}{Algorithms}
\Crefname{algocf}{Algorithm}{Algorithms}
\crefname{alg}{Algorithm}{Algorithms}
\Crefname{alg}{Algorithm}{Algorithms}
\setlist[enumerate]{1.,topsep=0pt,itemsep=-1ex,partopsep=1ex,parsep=1ex,leftmargin=15pt}
\definecolor{vintageblack}{HTML}{484043}
\definecolor{vintageyellow}{HTML}{FFC805}
\definecolor{vintagegreen}{HTML}{38A528}
\definecolor{vintageorange}{HTML}{FF4D25}
\definecolor{vintagepurple}{HTML}{AE56E2}
\newcommand{\system}{\textsc{Airphant}\xspace}
\newcommand{\tofix}[1]{{\fcolorbox{red}{pink}{\parbox{8cm}{#1}}}}
\newcommand{\fixed}[1]{{\color{purple} #1}}
\begin{document}
%
\title{\system: Cloud-oriented Document Indexing}


\author{\IEEEauthorblockN{
Supawit Chockchowwat,
Chaitanya Sood and
Yongjoo Park
}
\IEEEauthorblockA{
University of Illinois at Urbana-Champaign\\
\{supawit2,csood2,yongjoo\}@illinois.edu}}


%


\maketitle

\begin{abstract}
Modern data warehouses
can
scale compute nodes independently of storage. These systems persist their data on cloud storage, which is always available and cost-efficient. Ad-hoc compute nodes then fetch necessary data on-demand from cloud storage.
This ability to quickly scale or shrink data systems is highly beneficial
if query workloads may change over time.

We apply this new architecture to \emph{search engines}
with a focus on optimizing their latencies in cloud environments.
However, simply placing existing search engines (e.g., Apache Lucene) on top of cloud storage significantly
increases their end-to-end query latencies (i.e., more than 6 seconds on average in one of our studies).
This is because their indexes can incur multiple network round-trips
due to their hierarchical structure (e.g., skip lists, B-trees, learned indexes).
To address this issue,
we develop a new statistical index (called IoU Sketch). For lookup, IoU Sketch makes multiple asynchronous network requests in parallel.
While IoU Sketch may fetch more bytes than existing indexes, it significantly reduces the index lookup time because parallel requests do not block each other.
Based on IoU Sketch, we build an end-to-end search engine, called \system;
we describe how \system builds, optimizes, and manages IoU Sketch; and ultimately, supports keyword-based querying.
In our experiments with four real datasets,
\system's average end-to-end latencies are between 13 milliseconds and 300 milliseconds, 
being up to 8.97$\times$ faster than Apache Lucence and 113.39$\times$ faster than Elasticsearch.
\end{abstract}


%
\IEEEpeerreviewmaketitle

\section{Introduction}
\label{sec:intro}


In public clouds, 
compute and storage can be scaled independently.
Based on this,
modern data warehouses including Snowflake~\cite{dageville2016snowflake}, Google BigQuery~\cite{google-bigquery}, Amazon Redshift~\cite{amazon-redshift}, and Presto~\cite{prestodb,prestosql}
allow users to quickly
start new clusters,
analyze a large volume of data retrieved on demand from cloud storage (e.g., AWS S3~\cite{aws-s3}, Azure Blob Storage~\cite{azure-blob}, GCP Cloud Storage~\cite{gcp-storage}),
and finally terminate those clusters if they are no longer needed.\footnote{Google BigQuery effectively takes this series of operations. However, they are hidden from users' perspective; the users are charged based on the cost of each query.}
Users can easily tune many properties of those clusters such as number of nodes, number of cores, memory size.
To maximize the performance, 
users can upgrade their clusters.
To reduce the cost, users can downsize their clusters.
Users can thus fine-tune the overall cluster bill
and expected performance; 
regardless, their data is kept independently in cloud storage, offering cost-efficiency and robustness.\footnote{Note that compute resources (e.g., cores, memory) typically takes a larger portion in cloud bills than storage. 
Thus, scaling down compute nodes can lead to bigger savings.}
This capability to quickly scale up and down clusters has been well received.\footnote{For example, as of early 2021, Snowflake is one of the fastest growing private companies in the data space, with its market capitalization reaching about \$70 billion~\cite{cnn}. Also, Amazon Redshift introduced in December 2019 a new type of node called \texttt{ra3},
which offers the flexibility in adjusting cluster sizes~\cite{redshift-ra3}.}
We refer to this new architectural shift as \emph{separation of compute and storage}.

\paragraph{New Direction}

We investigate applying this new architectural shift to
\emph{search engines}. Search engines are systems for retrieving relevant documents that contain user-provided search keywords.
Commonly used search engines all rely on some form of indexes (e.g., skip lists~\cite{pugh1990skip}, B+ tree~\cite{Bayer1972,Chen2002,Chen2001}, learned indexes~\cite{Kraska2018,Ding2020}) to quickly find relevant documents.
For fast index traversals, 
they need their compute and storage closely located on the same machine.
This architecture has a disadvantage that we need to keep a large cluster running as more documents are indexed,
even if query workloads are light or even if some documents are queried infrequently.

However, if we can build a special index stored entirely in cloud storage,
we can scale compute nodes independently of indexing. This offers flexibility when query workloads change over time or inquire about particular documents more frequently than others. For instance, rarely queried documents can simply be dormant in low-cost cloud storage along with their index.
This new approach, if possible, can enable highly cost-efficient
searching.

\paragraph{Challenges}

Existing search engines do not perform well under the separation of compute and storage.
Because existing indexes have hierarchical structures~\cite{pugh1990skip,Bayer1972,Chen2002,Chen2001,Kraska2018,Ding2020,bialecki2012apache}, they incur significant traversal overhead when
the data is moved further away from compute.
To identify \emph{postings} (i.e. references to documents) of documents that contain the search keywords, or specifically, to locate the data block containing those postings (called a \emph{postings list}), an existing index requires traversing from its root, to a child node at the next level, and so on, until we reach a leaf node. In this process, to go one level deeper, we need to fetch the next node from the storage device, incurring an additional communication.
This means that to reach a leaf node at level $N$, we need to make $N$ sequential back-to-back communications.
If those individual communications are fast enough (e.g., local SSDs), the end-to-end latency for $N$ communications may be tolerable. 
However, if those nodes are stored in cloud storage, each communication needs a network round-trip, 
which can be orders of magnitude slower than local SSDs. If such communications must be made sequentially multiple times,
the total end-to-end latency quickly adds up, significantly affecting end-user experiences.
Although node caching may reduce communications, allocating a large enough cache to store the entire index is prohibitively expensive when the corpus size is large. For this reason, it is generally hard to avoid multiple sequential communications in existing indexes.
On a separate note, Elasticsearch has a S3 plugin, only for snapshots~\cite{elasticsearch-plugin}, but not for interactive querying.

\paragraph{Our System}

We introduce \system,\footnote{A portmanteau from ``air'' and ``elephant'', signifying the lightweight system capable of serving from a large corpus} \emph{a new search engine specifically designed for low-latency querying under the separation of compute and storage}.
While \system requires several nontraditional design decisions, its core idea is straightforward:
\emph{to minimize end-to-end query latencies, we should completely avoid sequential back-to-back communications with cloud storage.}
Instead, \system issues multiple asynchronous requests in parallel (thus, no blocking in between)
to obtain a \emph{postings list}.
This asynchronous approach is enabled by our novel statistical index.
Once we have a postings list, the rest of the process is almost identical; 
\system retrieves individual documents and presents them to users.

As noted above, the crux of \system is our statistical index
called \emph{Intersection of Unions Sketch (IoU Sketch)} (\cref{sec:indexing}).
Unlike existing index structures,
IoU Sketch initially produces multiple \emph{super postings lists (superpost)},
where each superpost contains both: 1) all relevant postings of documents containing search keywords,
and 2) some irrelevant postings of the documents not containing search keywords.
These superposts have two important properties.
First, they are independent from each other in retrieval process; thus, we can retrieve them in parallel.
Second, by intersecting them, the number of irrelevant documents reduces exponentially due to their incoherence enforced by randomization.
Thus, the final postings list---the intersection of all superposts---is \emph{almost} identical 
to the one from existing indexes.

Yet the final postings list may contain a few false positives in expectation. 
Nonetheless, \system removes them as it retrieves actual content of documents at a later step, recovering a perfect precision as a result.
Also importantly, superposts and so the final postings list contain no false negative; \system recalls all documents that indeed contain search keywords.
Moreover, IoU Sketch is carefully optimized according to a word frequency distribution in the corpus and queries (\cref{sec:indexing}) given memory and accuracy constraints.
Although these false positives lead to an additional cost in document retrieval,
IoU Sketch's performance benefit overcomes such cost, leading to an improvement over existing indexing techniques.
As a result, \system significantly outperforms existing search engines deployed similarly, keeping its query latencies always under a second even for the largest dataset we have tested.
IoU Sketch is built per corpus; thus, it is sufficient to download it only once before querying. In addition, its size is configurable (less than 2MB in our experiments), making IoU Sketch lightweight to keep it in memory.

\paragraph{Summary of Contributions}
The contributions of this work are organized as follows:
\begin{enumerate}
\item We introduce \system, a new search engine optimized for cloud environments.
Following the separation of compute and storage, \system persists all data (indexes and documents) in cloud storage.
(\cref{sec:motivation} and \cref{sec:design})
\item We describe a novel statistical index called IoU Sketch
as a core component of \system.
IoU Sketch can accurately identify documents containing search keywords with a single batch of concurrent communications with cloud storage; consequently, its end-to-end latency is significantly shorter than existing indexes.
(\cref{sec:airphant})
\item We show how we can optimize IoU Sketch for a given corpus and requirement. Specifically, we express IoU Sketch's accuracy in terms of expected number of false positives
parameterized by IoU Sketch structure. Then, we analyze and present an efficient optimization algorithm. (\cref{sec:airphant})
\item We empirically compare and analyze the performance of \system to other existing systems such as Lucene~\cite{lucene}, Elasticsearch~\cite{elasticsearch}, SQLite~\cite{sqlite}, and na\"ive hash table. (\cref{sec:exp})
\end{enumerate}

\section{\system: Core Ideas}
\label{sec:background}
\label{sec:motivation}


This section introduces the core ideas of \system.
For this, we first briefly recap core concepts in search engines (\cref{sec:motivation:background}).
Next, we discuss system challenges of running search engines on top of cloud storage (\cref{sec:motivation:challenges}).
Finally, we present the core ideas of \system for addressing those challenges (\cref{sec:motivation:approximate:index}).

\subsection{Search Engines: Background}
\label{sec:motivation:background}

This section overviews the core concepts in document indexing and searching~\cite{schutze2008introduction}.
First, we describe high-level user interface of search engines.
Next, we describe how a search engine quickly finds the documents containing a search keyword.

\paragraph{User Interface}

We briefly describe a typical programming interface of search engines from end users' perspectives.
\cref{fig:code:search_engine} shows an example code snippet for creating/indexing documents and performing a keyword-based search. The strings passed to document objects (i.e., \texttt{"hello world"} and \texttt{"hello airphant"}) are parsed by the search engine into multiple words (i.e., \texttt{"hello"} and \texttt{"world"} for doc1; \texttt{"hello"} and \texttt{"airphant"} for doc2).
These parsing rules are configurable by users~\cite{lucene-codecs}.
Then, the user can use any of those parsed keywords (i.e., hello, world, airphant) for searching.

\paragraph{Internal Workflow}

A typical search engine internally implements an inverted index to retrieve relevant documents.
An inverted index is a data structure to quickly identify the documents containing a search keyword.
It normally consists of two sub-components: 
1) \emph{postings lists}, and 2) \emph{term index}.
A postings list is a list of document IDs (e.g., doc1 and doc2) containing an associated keyword (e.g., ``hello'').
There are as many postings lists as the number of indexed keywords.
The term index is a map from each keyword (e.g., ``hello'') to the location of its associated postings list.
For fast searching, the term index should be able to quickly return the location of the postings list.
B-trees and skip lists and commonly used data structures for their asymptotically fast lookup operations.
Inverted indexes (both postings lists and term index) are constructed when documents are inserted into a search engine.

Given a query keyword, a search engine performs the following operations.
First, using the term index, it identifies the location of the postings list associated with the search keyword.
Second, it fetches the postings list for the search keyword.
Third, it fetches the actual contents of the documents and return them to the user.

For the term index, hierarchical indexes are commonly used.
A skip list~\cite{pugh1990skip,kirschenhofer1995analysis}
is used by Apache Lucene~\cite{lucene}, an underlying engine for distributed search engines such as ElasticSearch~\cite{elasticsearch} and Apache Solr~\cite{smiley2015apache}.
While it is probabilistic, a skip list needs $O(\log n)$ steps on average for each lookup.
A B-tree can also be used for the term index, which has the same lookup cost $O(\log n)$.


\begin{figure}[t]
\begin{lstlisting}[
    language=java,
    basicstyle=\ttfamily\footnotesize]
// index two documents
Index index = new Index();
Document doc1 = new Document("hello world");
index.addDocument(doc1);
Document doc2 = new Document("hello airphant");
index.addDocument(doc2);

// search for documents containing "airphant"
Document[] docs = index.search("airphant");
\end{lstlisting}
\vspace{-3mm}
\caption{Example user interface of a search engine}
\label{fig:code:search_engine}
\vspace{-2mm}
\end{figure}

\subsection{Search Engines on Cloud: Challenges}
\label{sec:motivation:cloud:search}
\label{sec:motivation:challenges}

Placing search engines directly on top of cloud storage increases their end-to-end search latencies significantly.
One critical performance bottleneck comes from term indexes.
We first explain why a term index lookup operations slow down substantially.
Then, we describe why simple alternative approaches are not likely to solve the issue.
These challenges motivate \system (\cref{sec:motivation:approximate:index}).

\paragraph{Cloud Storage} 

Major public cloud computing vendors (e.g., AWS, GCP, Azure) all offer cloud storage services. Simply speaking, these services are object storage where each object or \emph{blob}
is identified by a name. To upload and download data to and from cloud storage, vendor-provided programming APIs are used. These APIs make requests over the network to fetch data. These services are robust against data loss (e.g., 99.999999999\% durability of AWS S3) and cost-efficient (\$20/TB/month at the time of writing).

\paragraph{Performance Challenge}

While the network bandwidth is increasing,
we observe that \emph{network latency}---the time to get the first byte---acts 
as a critical bottleneck for search engines.
That is, whether we are retrieving a smaller or larger volume of data, there is some overhead we must pay for
every network request.
This network latency causes an affine relationship between 1) the size of data to fetch and 2) the total elapsed time
(referred to as \emph{retrieval latency}).
\cref{fig:network} depicts such a relationship using a small virtual machine (2 cores, 2GB memory) with multithreaded download on Google Cloud
Note that both X-axis and Y-axis are in log-scale.
The figure shows that the retrieval latency remains around 50 milliseconds until we increase the data size beyond 2MB. After that point, the retrieval latency increases linearly.

This affine characteristic of network performance causes existing search engines to perform poorly.
To find the postings list for a keyword, a search engine performs a lookup using a term index.
This lookup may involve multiple round-trips to storage devices for fetching intermediate nodes.\footnote{See our description on term index operations in \cref{sec:motivation:background}.}
If the storage device is cloud storage (not a local SSD), each round-trip incurs a network latency. 
When multiple network requests are made, lookup latency quickly increases even if each trip retrieves a small amount of data.
Parallelizing these network requests is non-trivial because in order to make the \emph{next} network request,
we need the information obtained from the \emph{previous} network request.
Of course, if we can cache the entire term indexes in memory for every corpus, we can completely avoid
any communications over the network. However, such scenario requires expensive equipment and so the approach is costly.
As described in \cref{sec:intro}, our goal is the opposite: we want to minimize our compute resources without much sacrificing search performance.

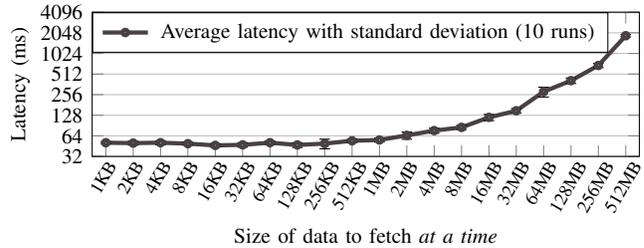
\begin{figure}[t]
\hspace*{-2mm}
\begin{tikzpicture}
    \begin{axis}[
        width=1.0\linewidth,
        height=35mm,
        xmin=9.5,
        xmax=29.5,
        xtick={10, 11, ..., 29},
        xticklabels={1KB,2KB,4KB,8KB,16KB,32KB,64KB,128KB,256KB,512KB,
                1MB,2MB,4MB,8MB,16MB,32MB,64MB,128MB,256MB,512MB},
        xticklabel style={rotate=60,xshift=1.0mm,yshift=1mm},
        xlabel near ticks,
        ylabel near ticks,
        ylabel style={align=center,yshift=-1mm},
        ymin=32,
        ymax=4096,
        ytick={1, 2, 4, 8, 16, 32, 64, 128, 256, 512, 1024, 2048, 4096},
        yticklabels={1, 2, 4, 8, 16, 32, 64, 128, 256, 512, 1024, 2048, 4096},
        xticklabel style={yshift=0mm},
        xlabel style={yshift=1mm},
        legend style={
            at={(0.0,1.0)},anchor=north west,column sep=2pt,
            draw=black,fill=white,
            /tikz/every even column/.append style={column sep=5pt}
        },
        legend cell align={left},
        legend columns=1,
        clip=false,
        every axis/.append style={font=\footnotesize},
        every x tick label/.append style={font=\scriptsize},
        ymode=log,
        minor grid style=lightgray,
        ymajorgrids,
        xlabel=Size of data to fetch \emph{at a time},
        ylabel=Latency (ms)]
    
    \addplot[
        ultra thick,draw=vintageblack,mark=*,mark options={scale=0.5, fill=white},
        error bars/.cd, y dir=both,y explicit,error bar style={color=black}]   
    table[x expr=\thisrowno{0},y=y,y error expr=\thisrowno{3}] {
    x c y    y+
    10	3	50.749	1.639488064
    11	3	50.116	3.264663331
    12	3	50.825	3.412556064
    13	3	49.164	2.196857351
    14	3	46.399	2.638292588
    15	3	47.259	3.531060432
    16	3	50.85	2.852203826
    17	3	47.063	3.744673402
    18	3	49.307	7.954900726
    19	3	54.194	2.950563788
    20	3	55.68	3.645176417
    21	3	64.982	7.988933735
    22	3	76.311	6.356670075
    23	3	85.126	5.952012171
    24	3	119.071	12.46208423
    25	3	148.809	11.50798414
    26	3	282.746	46.46524632
    27	3	410.031	37.38845379
    28	3	683.203	46.70860141
    29	3	1858.085	74.6925704
    };
    
    \addlegendentry{Average latency with standard deviation (10 runs)}
  
      
    \end{axis}
    \end{tikzpicture}

\vspace{-2mm}
\caption{End-to-end latency between Google Cloud virtual machines and Google Cloud Storage.
}
\label{fig:network}
\vspace{-2mm}
\end{figure}

\subsection{Statistical Approach to Indexing}
\label{sec:motivation:approximate:index}

In this section, we present the core ideas of our statistical indexing.
Its primary goal is to avoid multiple sequential communications in obtaining a postings list.
To achieve this, we propose non-trivial changes to inverted indexes.
In essence, it makes the following systems tradeoff:
the payloads of individual requests become larger, but those requests are made in parallel.

\paragraph{Bins}

To keep the pointers to \emph{all} postings lists in memory,
we form \emph{bins} by merging multiple keywords. Accordingly, associated postings lists are also merged into one. 
For example, consider three regular keywords and their postings lists: (``hello'' $\rightarrow$ (doc1, doc2)), (``world'' $\rightarrow$ (doc1)), and (``airphant'' $\rightarrow$ (doc2, doc3)). 
Suppose we merge ``hello'' and ``world'' into one bin $b_1$, which produces:
($b_1$ $\rightarrow$ (doc1, doc2)), (``airphant'' $\rightarrow$ (doc2, doc3)).
After merge, the number of keywords is reduced from three to two, but we no longer have exact postings list for merged keywords.
In other words, the postings list for $b_1$ contain both the postings list for ``hello'' and the list for ``world''. Thus, some of the documents in (doc1, doc2) may not contain ``hello'' (false positives); however, no other documents contain ``hello'' (\emph{no false negatives}).
Systematically, we rely on hash functions to determine how to group original keywords into bins.


This merge operation has one drawback: we will have to fetch more documents.
This drawback is more significant if there are much fewer bins than original keywords (e.g., 1000 keywords to 1 bin on average). 
Note that although the false positive documents can be removed by examining the actual content, in order to do so,
we must first fetch those documents.
Fetching, say, $1000\times$ more documents can slow down end-to-end search latencies greatly.
To address this, we exploit the probabilistic nature of random merge, as described below.

\paragraph{Multi-layer Structure}

Retrieving a large number of actual documents is slow. To avoid this, we reduce false positives by extending the above approach.
Note that merging keywords produces a map from a bin to a (merged) postings list. 
By repeating this merge operation, we construct a multi-layer (say $L$ layers) structure where each layer contains a different hash function.
Different hash functions produce different groupings;
thus, each layer is associated with different sets of bins. 

Given a keyword, we use this $L$-layer structure to obtain an accurate postings list, as follows. First, we apply $L$ hash functions to a keyword, and look up those hash values in our multi-layer maps; then, we obtain the \emph{pointers} to $L$ postings lists. 
Second, we fetch $L$ postings lists in parallel from cloud storage (here, we make a single batch of concurrent network requests). 
Finally, we intersect those $L$ postings lists to obtain a final posting list.

The final postings list is accurate (i.e., small false positives) because most false positives are eliminated
as part of intersection operations.
This is due to \emph{the multiplication rule for independent events} in probability.
That is, if we consider a single merge operation, it introduces many false positives into a postings list (from the perspective of one of the merged keywords).
However, if we consider multiple independent merge operations and take only the common postings among them, the number of false positives decreases exponentially.
We formally discuss its mathematical properties in \cref{sec:airphant}.
The multi-layer structure is still memory-efficient because it only stores hash functions and the pointers to (merged) postings lists; it does not store original keywords.




\section{System Design}
\label{sec:design}

We describe \system's core building blocks and concepts,
with a focus on their high-level operations.
Specifically, we present the current scope of \system (\cref{sec:scope}), its internal components, and offline operations (e.g., indexing building).




\subsection{Current Scope}
\label{sec:scope}

\system supports important use cases, but its capability is still limited compared to
full-fledged search engines such as ElasticSearch~\cite{elasticsearch} and Apache Solr~\cite{smiley2015apache}.
This section describes what \system currently supports: (a) the query workloads it can handle and (b) a few requirements it needs from cloud storage.

\paragraph{Documents and Queries}

\system offers document searches with exact keyword matching; that is, it finds the documents containing a given set of keywords. 
In comparison, existing full-fledged search engines also support others such as range queries, 
fuzzy queries, etc.
\system aims towards read-oriented workloads where the corpus doesn't change frequently. This focus allows many design decisions to fully harness strength of cloud storage. As of now, extensions to support a wider class of queries and handle frequent corpus updates are deferred to future works.

\paragraph{Target Environments}

\system targets modern public cloud,
where compute nodes can quickly turn on and off while most data can be stored safely in cloud storage.
\system heavily relies on cloud storage for data management. It indexes the documents stored in cloud storage and subsequently persists index structures in cloud storage as well. It assumes that cloud storage offers random read operations; that is, fetching bytes from an arbitrary offset doesn't require full read. 
Random reads are useful in packing multiple postings lists in a single blob; \system can read arbitrary postings lists without performance penalty. Also, keeping all postings lists in a few blob makes the overall data management easier because otherwise, we will need as many blobs as the number of bins.
Random reads are already supported by major cloud vendors~\cite{gcp-random-read,aws-s3-random-read,azure-random-read}.
Note that original documents may be stored in a single blob (e.g., delimited by line breaks) or in different blobs.
In each posting, \system records (blob name, offset, length) as part of a document identifier.

After constructing necessary structures, \system Searcher (described later shortly) can reside anywhere with an access to the cloud storage to serve search queries. Because of its configurable memory usage, it can be sufficiently portable to deploy on IoT and mobile devices, or it can inflate to support a humongous corpus on powerful machines. Perhaps one of interesting settings is the serverless deployment serving query requests, for example, function-as-a-service (FaaS) e.g. Google Cloud Functions \cite{gcp-function}, AWS Lambda \cite{aws-lambda}, and Microsoft Azure Functions \cite{azure-function}. Because of the minimal initialization, the deployment manager can quickly scale up or down based on the current demand across different corpuses.

\begin{table}[t]
\caption{Component-wise correspondence between Apache Lucene~\cite{lucene} and \system (this work). Skip list and postings list are sub-components of Lucene index. Likewise, MHT and superpost are sub-components of IoU Sketch.}
    \label{tab:term-mapping}
    
    \centering
    \small
    \begin{tabular}{ll}
        \toprule
         Apache Lucene & \system (Ours)  \\
        \midrule
        Lucene index (inverted index) & IoU Sketch \\
        $\cdot$ Skip Lists (term index) & $\cdot$ Multilayer Hash Table (MHT) \\
        $\cdot$ Postings List & $\cdot$ Super Postings List (superpost) \\
        \bottomrule
    \end{tabular}
    \vspace{-4mm}
\end{table}

\subsection{Statistical Inverted Index}
\label{sec:design:indexes}

This section describes systems aspect of our statistical (inverted) index.
We will mention core concepts introduced in \cref{sec:motivation:approximate:index},
with a focus on deployment and management perspectives.

Our statistical index is called \emph{IoU Sketch}; we named it IoU or \emph{intersection of unions} because our index construction involves unioning (or merging) multiple postings lists into one, and our search involves intersecting multiple postings lists.
IoU Sketch consists of two sub-components: a multi-layer hash table (MHT) and super postings lists (superposts). 
MHT is the multi-layer map we described in \cref{sec:motivation:approximate:index}.
Superposts are merged postings lists (also described in \cref{sec:motivation:approximate:index}).
MHT is downloaded and kept in memory when a certain corpus is searched for the first time.
Superposts are always kept on cloud storage.
To clarify their roles in existing contexts,
\cref{tab:term-mapping} makes one-to-one comparisons between Apache Lucene and ours.
Of course, their internal operations are different.





\subsection{Components and Workflow}
\label{sec:components}

\system consists of two components: Builder and Searcher. 
Builder creates and persists an index; Searchers uses the index for querying.

\paragraph{Builder Workflow} 
\system Builder (or simply, Builder) is a component that creates IoU Sketchs and persists them on cloud storage.
Builder creates a single IoU sketch per corpus.
Each IoU Sketch consists of superposts and MHT, both of which Builder creates and persists.
\cref{fig:design:airphant} lays out the steps to build IoU Sketch from profiling to optimization.


Builder's index creation starts when the user passes a corpus and configurations.
Builder uses a corpus-document parser to unwrap a blob into documents and generate postings that refer to their documents' byte ranges, allowing direct retrieval afterwards.
Builder then uses a document-word parser to extract words. The user can select both a corpus-document and a document-word parsers for each corpus. By default, a single blob may contain multiple documents; however, developers can override this with custom parsers.

These parsed documents are then profiled to collect statistics necessary for index building. 
\system Builder makes a single pass over all documents during profiling. The collected statistics include the total numbers of documents and words, document lengths, and document frequencies (i.e., the number of documents that contains a specific word). 
The statistics are used for IoU's structural optimization (\cref{section:iou-sketch,sec:airphant:stop-words}).

Based on the profile, Builder optimizes its IoU Sketch structure using the algorithm described in \cref{section:iou-sketch}.
The accuracy and memory requirements may also be specified.
Alternative to auto-tuning, users can also manually select the IoU Sketch structure, skipping both profiling and optimization steps.

\input{figures/design_airphant}

Builder first creates superposts.
\system Builder generates superposts from all documents. 
That is, a superpost is constructed for each bin, and a collection of all superposts are persisted.
Recall that each superpost is a list of merged postings, where
each posting
comprises (a blob name, an offset, and a length), which are used to locate actual documents.
The collection of superposts are concatenated into a single blob using a compaction encoding (\cref{section:doclist_compaction}).
This makes data management easier.


Next, Builder creates a MHT.
MHT stores the pointers to those superposts, which are compacted in a single blob.
To locate each superpoint, MHT's pointer has a blob name, byte offset, and byte size.
In addition, Builder stores seeds of hash functions (used for creating super posts) and other metadata in the same file.
This file is persisted 
as another blob.


\paragraph{Configuring Builder}

The user can configure \system in different ways. 
A storage driver specifies how to read a corpus.
Parsers specifies how to separate documents in a corpus and how to extract keywords from a document. 
The accuracy of IoU sketch can be set in terms of average number of irrelevant documents (i.e, false positive rate). 
The memory limit on MHT can also be placed.

\paragraph{Searcher} 

\system Searcher (or simply, Searcher) is a light-weight component that retrieves the documents containing search keywords.
Searcher relies on IoU Sketches.
Searcher performs two types of operations: initialization and querying.
Initialization happens only once per corpus.
Querying happens for each query.
The same diagram (\cref{fig:design:airphant}) illustrates the two procedures to fulfill queries: initialization and querying.

When a user opens a corpus, Searcher initializes itself from cloud-stored index structure; it retrieves
hash seeds and postings list pointers (for MHT), both of whose memory footprint is predictable and controllable 
via IoU Sketch configuration. It then reconstructs hash functions, and hence, MHT.

When a query arrives, Searcher hashes each word in the query to collect a set of pointers to postings list.
Using these pointers, Searcher concurrently fetches the corresponding postings lists from cloud storage and 
computes the intersection of postings lists (i.e., the final postings list). 
This is the only batch of concurrent requests for lookup.
Then, the documents identified by this final postings list is retrieved from cloud storage.
Finally, Searcher filters out irrelevant documents after fetching the documents. 
This filtering process is much fast compared to document-fetching.
The user can also fetch only top $K$ relevant documents, which is useful for lowing latencies (\cref{sec:airphant:top-k-query}).

\section{Statistical Inverted Index}
\label{sec:airphant}
\label{sec:indexing}


In this section, we detail our statistical inverted index (IoU Sketch).
First, we define IoU Sketch (\cref{section:iou-sketch}). 
Then, we detail corpus profiling (\cref{sec:airphant:profiling}) and superposts encoding (\cref{section:doclist_compaction}). 
Next, we introduce two  techniques to accelerate \system query: top-$K$ queries (\cref{sec:airphant:top-k-query}) 
and special handling of extremely common words (\cref{sec:airphant:stop-words}). 

\subsection{IoU Sketch} 
\label{section:iou-sketch}

IoU Sketch is the core index data structure that maps a keyword to a postings list. We first explain its data structure and interface as well as raw parameters. Second, given a corpus, we analyze the expected accuracy based on specific settings of raw parameters. Because one of raw parameters is not intuitive from user's perspective, IoU Sketch offers an alternative configuration based on memory constraint and desired accuracy. Using earlier analysis, it then automatically tunes its raw parameters to meet those constraints and optimize for search performance. Lastly, we further examine tightness of IoU Sketch's accuracy.

\paragraph{Data Structure} Internally, IoU Sketch is a $L$-layer hash table with $L$ different hash functions. Each bin in a table contains a superposts that is aggregated via insertion operations. Under IoU Sketch's hash functions, any given word is mapped to one bin per layer ($L$ bins in total) where the postings list of the word is guaranteed to be a subset of the superpost in each associated bin. Such guarantee eliminates false negative but entails false positive. IoU Sketch supports two main functionalities.
\begin{enumerate}
    \item \texttt{insert(word, document list)}: For each layer, it hashes the word to find its bin and update bin's superpost to its union with word's postings list.
    \item \texttt{query(word)}: It retrieves superposts from all layers; then, outputs the intersection of all superposts.
\end{enumerate}

\noindent
For illustration, \cref{fig:iou_example} shows IoU Sketch hash tables after inserting four words \texttt{w1}, \texttt{w2}, \texttt{w3}, and \texttt{w4} with different postings lists. Specifically, word \texttt{w2} is mapped to \texttt{(layer1, bin2)}, \texttt{(layer2, bin2)}, and \texttt{(layer3, bin1)} using hash functions from the three layers. It shares the same bin as \texttt{w3} in the first layer, \texttt{w4} in the second layer, and both \texttt{w1} and \texttt{w3} in the third layer. Each bin then stores the aggregated superpost of the words; for example, the superpost of \texttt{(layer1, bin2)} is the union of postings lists of words \texttt{w2} and \texttt{w3}: $\{ \texttt{d2}, \texttt{d3} \} \cup \{ \texttt{d2}, \texttt{d3}, \texttt{d4} \} = \{ \texttt{d2}, \texttt{d3}, \texttt{d4} \}$. Querying the word \texttt{w2} therefore results in a postings list $\{ \texttt{d2}, \texttt{d3}, \texttt{d4} \} \cap \{ \texttt{d2}, \texttt{d3}, \texttt{d4}, \texttt{d5} \} \cap \{ \texttt{d1}, \texttt{d2}, \texttt{d3}, \texttt{d4} \} = \{ \texttt{d2}, \texttt{d3}, \texttt{d4} \}$ which contains a false positive postings, \texttt{d4}. On the other hand, querying the word \texttt{w1} fortunately produces in the exact postings list $\{ \texttt{d1} \}$ despite the word sharing bins in second and third layers with other three words.

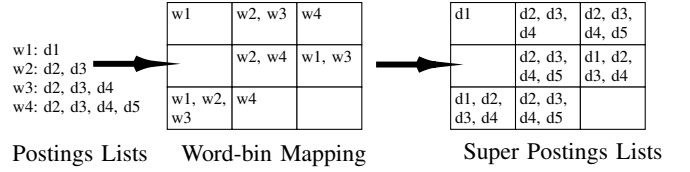
\begin{figure}[t]
\centering

\begin{subfigure}{\columnwidth}

\tikzset{every picture/.style={line width=0.40pt}} 

\begin{tikzpicture}[x=0.40pt,y=0.40pt,yscale=-1,xscale=1]

\draw  [fill={rgb, 255:red, 255; green, 255; blue, 255 }  ,fill opacity=1 ] (200,72) -- (261.33,72) -- (261.33,112) -- (200,112) -- cycle ;
\draw  [fill={rgb, 255:red, 255; green, 255; blue, 255 }  ,fill opacity=1 ] (200,112) -- (261.33,112) -- (261.33,152) -- (200,152) -- cycle ;
\draw  [fill={rgb, 255:red, 255; green, 255; blue, 255 }  ,fill opacity=1 ] (200,152) -- (261.33,152) -- (261.33,192) -- (200,192) -- cycle ;
\draw  [fill={rgb, 255:red, 255; green, 255; blue, 255 }  ,fill opacity=1 ] (261.33,72) -- (322.67,72) -- (322.67,112) -- (261.33,112) -- cycle ;
\draw  [fill={rgb, 255:red, 255; green, 255; blue, 255 }  ,fill opacity=1 ] (261.33,112) -- (322.67,112) -- (322.67,152) -- (261.33,152) -- cycle ;
\draw  [fill={rgb, 255:red, 255; green, 255; blue, 255 }  ,fill opacity=1 ] (261.33,152) -- (322.67,152) -- (322.67,192) -- (261.33,192) -- cycle ;
\draw  [fill={rgb, 255:red, 255; green, 255; blue, 255 }  ,fill opacity=1 ] (322.67,72) -- (384,72) -- (384,112) -- (322.67,112) -- cycle ;
\draw  [fill={rgb, 255:red, 255; green, 255; blue, 255 }  ,fill opacity=1 ] (322.67,112) -- (384,112) -- (384,152) -- (322.67,152) -- cycle ;
\draw  [fill={rgb, 255:red, 255; green, 255; blue, 255 }  ,fill opacity=1 ] (322.67,152) -- (384,152) -- (384,192) -- (322.67,192) -- cycle ;
\draw  [fill={rgb, 255:red, 255; green, 255; blue, 255 }  ,fill opacity=1 ] (468,72) -- (529.33,72) -- (529.33,112) -- (468,112) -- cycle ;
\draw  [fill={rgb, 255:red, 255; green, 255; blue, 255 }  ,fill opacity=1 ] (468,112) -- (529.33,112) -- (529.33,152) -- (468,152) -- cycle ;
\draw  [fill={rgb, 255:red, 255; green, 255; blue, 255 }  ,fill opacity=1 ] (468,152) -- (529.33,152) -- (529.33,192) -- (468,192) -- cycle ;
\draw  [fill={rgb, 255:red, 255; green, 255; blue, 255 }  ,fill opacity=1 ] (529.33,72) -- (590.67,72) -- (590.67,112) -- (529.33,112) -- cycle ;
\draw  [fill={rgb, 255:red, 255; green, 255; blue, 255 }  ,fill opacity=1 ] (529.33,112) -- (590.67,112) -- (590.67,152) -- (529.33,152) -- cycle ;
\draw  [fill={rgb, 255:red, 255; green, 255; blue, 255 }  ,fill opacity=1 ] (529.33,152) -- (590.67,152) -- (590.67,192) -- (529.33,192) -- cycle ;
\draw  [fill={rgb, 255:red, 255; green, 255; blue, 255 }  ,fill opacity=1 ] (590.67,72) -- (652,72) -- (652,112) -- (590.67,112) -- cycle ;
\draw  [fill={rgb, 255:red, 255; green, 255; blue, 255 }  ,fill opacity=1 ] (590.67,112) -- (652,112) -- (652,152) -- (590.67,152) -- cycle ;
\draw  [fill={rgb, 255:red, 255; green, 255; blue, 255 }  ,fill opacity=1 ] (590.67,152) -- (652,152) -- (652,192) -- (590.67,192) -- cycle ;
\draw [line width=2.25]    (130,130) -- (183,130) ;
\draw [shift={(187,130)}, rotate = 180] [color={rgb, 255:red, 0; green, 0; blue, 0 }  ][line width=2.25]    (17.49,-5.26) .. controls (11.12,-2.23) and (5.29,-0.48) .. (0,0) .. controls (5.29,0.48) and (11.12,2.23) .. (17.49,5.26)   ;
\draw [line width=2.25]    (397.5,130.67) -- (450.5,130.67) ;
\draw [shift={(454.5,130.67)}, rotate = 180] [color={rgb, 255:red, 0; green, 0; blue, 0 }  ][line width=2.25]    (17.49,-5.26) .. controls (11.12,-2.23) and (5.29,-0.48) .. (0,0) .. controls (5.29,0.48) and (11.12,2.23) .. (17.49,5.26)   ;

\draw (52,109) node [anchor=north west][inner sep=0.75pt]  [font=\scriptsize,xscale=0.9,yscale=0.9] [align=left] {w1: d1\\w2: d2, d3\\w3: d2, d3, d4\\w4: d2, d3, d4, d5};
\draw (202,75) node [anchor=north west][inner sep=0.75pt]  [font=\scriptsize,xscale=0.9,yscale=0.9] [align=left] {w1};
\draw (263.33,75) node [anchor=north west][inner sep=0.75pt]  [font=\scriptsize,xscale=0.9,yscale=0.9] [align=left] {w2, w3};
\draw (324.67,75) node [anchor=north west][inner sep=0.75pt]  [font=\scriptsize,xscale=0.9,yscale=0.9] [align=left] {w4};
\draw (263.33,115) node [anchor=north west][inner sep=0.75pt]  [font=\scriptsize,xscale=0.9,yscale=0.9] [align=left] {w2, w4};
\draw (324.67,115) node [anchor=north west][inner sep=0.75pt]  [font=\scriptsize,xscale=0.9,yscale=0.9] [align=left] {w1, w3};
\draw (202,155) node [anchor=north west][inner sep=0.75pt]  [font=\scriptsize,xscale=0.9,yscale=0.9] [align=left] {w1, w2,\\w3};
\draw (263.33,155) node [anchor=north west][inner sep=0.75pt]  [font=\scriptsize,xscale=0.9,yscale=0.9] [align=left] {w4};
\draw (470,75) node [anchor=north west][inner sep=0.75pt]  [font=\scriptsize,xscale=0.9,yscale=0.9] [align=left] {d1};
\draw (531.33,75) node [anchor=north west][inner sep=0.75pt]  [font=\scriptsize,xscale=0.9,yscale=0.9] [align=left] {d2, d3,\\d4};
\draw (592.67,75) node [anchor=north west][inner sep=0.75pt]  [font=\scriptsize,xscale=0.9,yscale=0.9] [align=left] {d2, d3,\\d4, d5};
\draw (531.33,115) node [anchor=north west][inner sep=0.75pt]  [font=\scriptsize,xscale=0.9,yscale=0.9] [align=left] {d2, d3,\\d4, d5};
\draw (592.67,115) node [anchor=north west][inner sep=0.75pt]  [font=\scriptsize,xscale=0.9,yscale=0.9] [align=left] {d1, d2,\\d3, d4};
\draw (470,155) node [anchor=north west][inner sep=0.75pt]  [font=\scriptsize,xscale=0.9,yscale=0.9] [align=left] {d1, d2,\\d3, d4};
\draw (531.33,155) node [anchor=north west][inner sep=0.75pt]  [font=\scriptsize,xscale=0.9,yscale=0.9] [align=left] {d2, d3,\\d4, d5};
\draw (51.67,204.67) node [anchor=north west][inner sep=0.75pt]  [xscale=0.9,yscale=0.9] [align=left] {Postings Lists};
\draw (211.33,204.33) node [anchor=north west][inner sep=0.75pt]  [xscale=0.9,yscale=0.9] [align=left] {Word-bin Mapping};
\draw (478,203) node [anchor=north west][inner sep=0.75pt]  [xscale=0.9,yscale=0.9] [align=left] {Super Postings Lists};

\end{tikzpicture}

\end{subfigure}

\caption{Example of IoU Sketch for a corpus with 5 documents and 4 distinct words. Word-bin mapping shows sets of words in corresponding IoU Sketch's bins. 
Super postings lists are the merged postings lists for those sets of words.}
\label{fig:iou_example}

\vspace{-3mm}
\end{figure}

Recall there are $n$ documents and let $B$ be the total number of bins across all layers. To persist, IoU Sketch only requires superposts and hash seeds to reconstruct itself, making its worst case storage complexity $O(\min\{ nmL, Bn+L \})$ if documents have $m$ words on average. To achieve single-cycle retrievals, \system Builder split IoU Sketch into superposts and MHT. It then stores superposts on cloud storage, so that MHT is $L$ layers of hash tables where each entry has a pointer to the corresponding superpost. Therefore, \system Searcher needs to hold $O(B)$ of memory for MHT ($O(L)$ hash seeds and $O(B)$ bin pointers where $L \ll B$).


\cref{fig:false-positives} empirically demonstrates that IoU Sketch achieves significantly fewer false positives compared to a hash table ($L = 1$) when $B$ is fixed. As $L$ increases from $1$, the number of false positives decreases rapidly. After a certain value, the allocated bins are divided into too many layers, resulting in a smaller number of bins per layer, and consequently, a higher selectivity and more false positives. Not only do these observations justify the use of multiple hash tables over a single one, but it also suggests that there is an optimal choice of $L$ depending on both the corpus and the choice of $B$.

\paragraph{Expected False Positives}
For a family of pairwise independent hash functions, IoU Sketch selects $(h_l)_{l=1}^L$ where each $h_l$ is the hash function for $l$-th layer. Let $W_i$ be the set of distinct words in $i$-th document with its size being $|W_i|$, and $W$ be the set of all words. Suppose $B$ is given and assume $B$ is divisible by $L$ for simplicity, the probability that $i$-th document is a false positive in a query of any irrelevant word $w$ is $q_i(L) = q_i(L; B, \{W_i\}_{i=1}^n)$. Independence between $w$ and $q_i$ is a result of pairwise independent hash family. \cref{eq:ind-fp} also shows the approximation $\qh_i(L)$ whose properties leads to an efficient optimization.
\begin{equation} \label{eq:ind-fp}
    q_i(L) = \left[ 1 - \left( 1 - \frac{1}{B/L} \right)^{|W_i|} \right]^L \approx \left[ 1 - e^{\frac{-|W_i| L}{B}} \right]^L = \qh_i(L)
\end{equation}

Assume a query word distribution $w \sim \text{Cat}(W, \vec{p})$ where $p_w$ is the prior probability of the word $w$ in a query, \cref{eq:fploss} describes the expected number of false positives over all words whose unit is count per query. This is the primary objective function to tune IoU Sketch. For brevity, we write $F(L) = F(L; B, \{W_i\}_{i=1}^n)$ and define its approximation $\Fh(L)$ similarly using $\qh_i(L)$ in place of $q_i(L)$. Here $c_i = \sum_{w \in W \setminus W_i} p_w$ is the probability of words not contained in $i$-th document and acts as a linear combination coefficient in $F$. \cref{fig:false-positives} asserts the resemblance between this formula and empirical observations. In fact, later, we confirms that the observed number of false positives is highly concentrated around this expectation.
\begin{equation} \label{eq:fploss}
    \begin{split}
        F(L) = 
        \sum_{i=1}^n \sum_{w \in W \setminus W_i} p_w q_i(L) = 
        \sum_{i=1}^n c_i q_i(L)
    \end{split}
\end{equation}

\input{figures/method_false_positives}

Even though $L$ is a discrete variable, we extend its domain to a continuous one $L \in \mathbb{R}, 1 \leq L \leq B$ to study richer characteristics of $F(L)$. In particular, the extension endows us with the derivative $\fh(L) = \frac{d}{dL} \Fh(L) = \sum_{i=1}^n c_i \qh_i(L)$. We condense the formula by substituting in $z_i(L) = 1 - \exp\left( -|W_i| L / B \right)$. The shift of focus towards the approximation eases the analysis and leads to an efficient algorithm to optimize IoU Sketch structure.
\begin{equation} \label{eq:ind-fp-grad}
    \qh_i'(L) = z_i(L)^{L-1} \left[ z_i(L) \ln z_i(L) - (1 - z_i(L)) \ln (1 - z_i(L)) \right]
\end{equation}

\paragraph{Optimization on Number of Layers}
The overall performance improves when the number of layers is smaller since a query would fetch and intersect a fewer superposts. In addition, a larger number of bins would replicate more postings across layers, further increasing the index storage size. Consequently, we are interested in minimizing the number of layers given constraints on the number of bins $B$ and false positives $F_0$. In other words, the optimization problem (\cref{eq:opt-problem}) finds the smallest number of layers that the $(B, L)$ IoU Sketch has a fewer expected number of false positives than $F_0$.
\begin{equation} \label{eq:opt-problem}
    L^{*} = \min L \quad \text{s.t. } 1 \leq L \leq B, F(L; B, \{W_i\}_{i=1}^n)) \leq F_0
\end{equation}

Unfortunately, $F(L)$ is non-convex and can contain multiple minimizers. Nevertheless, an analysis on its approximation $\Fh(L)$ reveals three important characteristics as the building blocks for Algorithm~\ref{algo:layer-opt}. First, there is a region of fast optimization covering practical $F_0$ values (\cref{lemma:fploss-left-fast}). Secondly, although $L$ can be as large as $B$, we only need to search in a much smaller interval (\cref{lemma:fploss-right-done}). Finally, there is a relatively cheap lower bound which allows us to quickly check the feasibility (\cref{lemma:fploss-lowerbound}).

With these lemmas in mind, Algorithm~\ref{algo:layer-opt} first validates the proposed constraints $B$ and $F_0$ with the lower bound. It then determines whether $L$ falls in the fast or slow region and picks the optimization routine accordingly. For the fast region where $\Fh(L)$ is decreasing, it performs a binary search to find the smallest $L$ in the range $[1, L_{\min}]$. On the other hand, the slow region in the range $[L_{\min}, L_{\max}]$ does not guarantee such monotonicity, so the algorithm iteratively attempts increasing values of $L$ until the constraint is met. If either the lower bound checking or iterative search fails, it is impossible to find a $L$ that satisfies the constrains and so the algorithm rejects.

\begin{lemma} \label{lemma:fploss-lowerbound}
    $L_i^{*} = \argmin_L \qh_i(L) = \frac{B}{|W_i|} \ln 2$. It immediately follows that $\qh_i(L_i^{*}) = 2^{-L_i^{*}}$ and so $\Fh(L) \geq \sum_{i=1}^n c_i 2^{-L_i^{*}}$.
\end{lemma}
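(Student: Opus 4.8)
The plan is to reduce the minimization of $\qh_i$ to a one–dimensional problem via the substitution $z = z_i(L) = 1 - e^{-|W_i|L/B}$ already used before \cref{eq:ind-fp-grad}. First I would note that, as $L$ ranges over $(0,\infty)$, $z$ ranges bijectively and increasingly over $(0,1)$ with inverse $L = -\tfrac{B}{|W_i|}\ln(1-z)$, and that $\qh_i(L) = z^L \in (0,1)$ for every finite $L>0$ while $\qh_i(L)\to 1$ both as $L\to 0^+$ (since $L\ln z \sim L\ln(|W_i|L/B)\to 0$) and as $L\to\infty$ (since $L\ln z \sim -L\,e^{-|W_i|L/B}\to 0$). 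By continuity, $\qh_i$ therefore attains a global minimum at an interior point of $(0,\infty)$, so it suffices to enumerate its critical points.

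Next I would invoke the derivative formula \cref{eq:ind-fp-grad}, $\qh_i'(L) = z_i(L)^{L-1}\bigl[z_i(L)\ln z_i(L) - (1-z_i(L))\ln(1-z_i(L))\bigr]$. Since $z_i(L)^{L-1}>0$, a critical point corresponds exactly to a zero in $(0,1)$ of $\psi(z) := z\ln z - (1-z)\ln(1-z)$. The crux is that $z=\tfrac12$ is the \emph{only} such zero. To show this I would use that $\psi$ is odd about $\tfrac12$ (so $\psi(\tfrac12)=0$) with $\psi(0^+)=\psi(1^-)=0$, and that $\psi'(z) = \ln\!\bigl(z(1-z)\bigr)+2$ vanishes only at $z_\pm = \tfrac12\bigl(1\pm\sqrt{1-4e^{-2}}\,\bigr)$ (real because $4e^{-2}<1$), being negative on $(0,z_-)$, positive on $(z_-,z_+)$, and negative on $(z_+,1)$. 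Tracking the sign of $\psi$ from these monotonicity intervals and the endpoint values then forces $\psi<0$ on $(0,\tfrac12)$ and $\psi>0$ on $(\tfrac12,1)$, so $z=\tfrac12$ is the unique zero and hence the unique — and thus global — minimizer of $\qh_i$.

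Translating back, $z_i(L_i^{*})=\tfrac12$ gives $|W_i|L_i^{*}/B = \ln 2$, i.e. $L_i^{*} = \tfrac{B}{|W_i|}\ln 2$, and $\qh_i(L_i^{*}) = (\tfrac12)^{L_i^{*}} = 2^{-L_i^{*}}$. Since the coefficients $c_i = \sum_{w\in W\setminus W_i} p_w$ are nonnegative and $\Fh(L) = \sum_{i=1}^n c_i\,\qh_i(L)$ separates over $i$, minimizing each summand independently yields $\Fh(L)\ge \sum_{i=1}^n c_i\,\qh_i(L_i^{*}) = \sum_{i=1}^n c_i\,2^{-L_i^{*}}$ for all $L$, which is the claim. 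I expect the uniqueness of the zero of $\psi$ (equivalently, ruling out spurious critical points of $\qh_i$) to be the only nonroutine step; an alternative for it is to set $u=-\ln z$, $v=-\ln(1-z)$, so that $e^{-u}+e^{-v}=1$, and maximize $uv$ on this curve by a symmetry/Lagrange argument, again landing on $u=v=\ln 2$ — but I would keep the $\psi$-monotonicity argument as primary since it is self-contained.
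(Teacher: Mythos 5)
Your proposal is correct and follows essentially the same route as the paper: it uses the derivative formula in \cref{eq:ind-fp-grad}, notes the factor $z_i(L)^{L-1}$ is positive, and locates the critical point at $z_i(L)=1/2$, then substitutes back. The only difference is that you rigorously justify two steps the paper leaves implicit — that $z=\tfrac12$ is the \emph{unique} zero of $z\ln z-(1-z)\ln(1-z)$ on $(0,1)$ and that the interior critical point is in fact the global minimum — which is a welcome tightening (and incidentally corrects the paper's typo "$\exp(-|W_i|L_i^{*}/B)=L_i^{*}$", which should read $=1/2$).
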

\begin{proof}
    From \cref{eq:ind-fp-grad}, the minimizer $L_i^{*}$ satisfies $\qh_i'(L_i^{*}) = 0$. Note that the left factor is always positive, so it must be the case that the right factor is zero or equivalently $z_i(L_i^{*}) = 1/2$. Therefore, $\exp\left( -|W_i| L_i^{*} / B \right) = L_i^{*}$; in other words, $L_i^{*} = \frac{B}{|W_i|} \ln 2$. Substitute this minimizer to \cref{eq:ind-fp} and \cref{eq:fploss} to produce the two results later.
\end{proof}

\begin{remark}
Because $F(L) > \Fh(L)$ for $1 \leq L \leq B$, we have also derived a lower bound $F(L) > \sum_{i=1}^n c_i 2^{-L_i^{*}}$, validating the feasibility check in \cref{algo:layer-opt}.
\end{remark} 

\begin{lemma} \label{lemma:fploss-left-fast}
    For $L < \min_i L_i^{*} = L_{\min}$, expected false positive is strictly and exponentially decreasing $\fh(L) < 0$ and $\Fh(L) = O\left( \frac{n}{2^L} \right)$.
\end{lemma}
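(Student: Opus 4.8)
The plan is to reduce everything to the closed form $\qh_i(L) = z_i(L)^{L}$ with $z_i(L) = 1 - e^{-|W_i|L/B}$, the derivative identity \cref{eq:ind-fp-grad}, and one fact carried over from \cref{lemma:fploss-lowerbound}: $z_i$ is strictly increasing in $L$ and $z_i(L_i^{*}) = \tfrac{1}{2}$. Hence for every $L$ with $1 \le L < L_{\min} = \min_i L_i^{*}$ we have $0 < z_i(L) < \tfrac{1}{2}$ for \emph{all} documents $i$ simultaneously, and this single inequality drives both claims.

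For the sign of $\fh$, I would write \cref{eq:ind-fp-grad} as $\qh_i'(L) = z_i(L)^{L-1}\, g\bigl(z_i(L)\bigr)$ with $g(z) := z\ln z - (1-z)\ln(1-z)$; since $z_i(L)^{L-1} > 0$, it suffices to prove $g(z) < 0$ on $(0,\tfrac{1}{2})$. One checks $g(\tfrac{1}{2}) = 0$ and $\lim_{z \to 0^+} g(z) = 0$, while $g'(z) = \ln\bigl(z(1-z)\bigr) + 2$ has exactly one zero in $(0,\tfrac{1}{2})$ because $z \mapsto z(1-z)$ is strictly increasing from $0$ to $\tfrac{1}{4}$ there; so $g$ decreases then increases on $(0,\tfrac{1}{2})$ and stays strictly below both boundary values $0$. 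Thus $\qh_i'(L) < 0$ for every $i$, and since each $c_i \ge 0$ with at least one $c_i > 0$ (the degenerate case $c_i \equiv 0$ gives $\Fh \equiv 0$, nothing to prove), $\fh(L) = \sum_{i=1}^n c_i \qh_i'(L) < 0$; i.e., $\Fh$ is strictly decreasing on $[1, L_{\min})$.

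For the exponential rate, since $0 < z_i(L) < \tfrac{1}{2}$ and $x \mapsto x^{L}$ is increasing for $x > 0$, we get $\qh_i(L) = z_i(L)^{L} < 2^{-L}$ for each $i$; summing and using $c_i \le 1$, $\Fh(L) = \sum_{i=1}^n c_i \qh_i(L) < 2^{-L} \sum_{i=1}^n c_i \le n\, 2^{-L}$, which is $\Fh(L) = O(n/2^{L})$. The only genuinely delicate step is the sign analysis of $g$ on $(0,\tfrac{1}{2})$: $g$ is not monotone there, so comparing endpoint values alone is insufficient, and one must argue via the shape of $g$ through $g'$ (equivalently, via the antisymmetry $g(1-z) = -g(z)$ together with $g'(\tfrac{1}{2}) = 2 - 2\ln 2 > 0$). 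Everything else is just monotonicity of $z_i$ in $L$ and of $x \mapsto x^{L}$.
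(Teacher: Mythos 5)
Your proof is correct, and it shares the paper's overall skeleton: restrict to $L<L_{\min}$ so that $z_i(L)<\tfrac12$ for every $i$ simultaneously, deduce $\qh_i(L)=z_i(L)^L<2^{-L}$, and sum with $c_i\le 1$ to obtain $\Fh(L)<n\,2^{-L}$ --- that half is essentially identical to the paper's. Where you genuinely diverge is the sign of the derivative. The paper disposes of it in one line by asserting the chain $\qh_i'(L)<2^{-L-1}\ln\frac{z_i(L)}{1-z_i(L)}<0$, but the first inequality in that chain does not hold as stated (e.g.\ at $L=2$ and $z_i(L)=\tfrac14$ the left side is $\approx-0.033$ while the middle is $\approx-0.137$), so the paper is implicitly leaning on exactly the fact you prove explicitly: that $g(z)=z\ln z-(1-z)\ln(1-z)$ is strictly negative on $(0,\tfrac12)$. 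Your route --- $g(0^+)=g(\tfrac12)=0$, $g'(z)=\ln\bigl(z(1-z)\bigr)+2$ with a single zero in $(0,\tfrac12)$ because $z\mapsto z(1-z)$ rises monotonically from $0$ through $e^{-2}$ to $\tfrac14$, hence $g$ dips below zero and returns --- supplies the justification the paper's one-liner skips, and you are right to flag that an endpoint comparison alone would not suffice since $g$ is not monotone there; the antisymmetry $g(1-z)=-g(z)$ also ties this cleanly to the increasing regime of \cref{lemma:fploss-right-done}. Your explicit treatment of the degenerate case $c_i\equiv 0$ is a minor refinement the paper omits. In short: same architecture and same final bound, but your derivative-sign argument is more careful than, and effectively repairs, the corresponding step in the paper.
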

\begin{proof}    
    Notice that if $L < L_i^{*}$, $z_i(L) < 1/2$. Using \cref{eq:ind-fp-grad}, we prove the strictly decreasing property: $\qh_i'(L) < 2^{-L-1} \ln \frac{z_i(L)}{1 - z_i(L)} < 0$. Similarly from $z_i(L) < 1/2$, we also have that $\qh_i(L) < 2^{-L}$ by \cref{eq:ind-fp}; therefore, if $L < \min_i L_i^{*}$, $\qh_i(L) < 2^{-L}$ for all $i \in [n]$. Subsequently, the expected false positive is also exponentially decreasing $\Fh(L) = \sum_{i=1}^n c_i \qh_i(L) < \sum_{i=1}^n c_i 2^{-L} \leq n 2^{-L}$.
\end{proof}

\begin{remark}
The region $[1, L_{\min}]$ covers a wide range of $F_0$. Even in the worst case where $c_i = 1$, the region covers the expected number of false positives down as low as $n 2^{-L_{\min}} = n 2^{-\frac{B}{\max_i |W_i|} \ln 2}$, that is, $B \geq \frac{1}{\ln 2} \times \max_i |W_i| \times \log_2 \frac{n}{F_0}$ surely enables fast optimization via binary search on the strictly decreasing function. Nonetheless, \cref{algo:layer-opt} measures a tighter expected false positive lower bound of the region $F(L_{\min})$ to decide whether to use fast optimization.
\end{remark} 

\begin{lemma} \label{lemma:fploss-right-done}
    For $L > \max_i L_i^{*} = L_{\max}$, expected false positive is strictly increasing $\fh(L) > 0$.
\end{lemma}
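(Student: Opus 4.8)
The plan is to reduce $\fh(L) = \sum_{i=1}^n c_i \qh_i'(L)$ to a pointwise sign analysis of the individual summands. Since every coefficient $c_i = \sum_{w\in W\setminus W_i} p_w$ is nonnegative, it suffices to show that each $\qh_i'(L) > 0$ for $L > L_{\max} = \max_i L_i^{*}$, and that at least one term is strictly positive (which holds in any non-degenerate setting, i.e.\ whenever the query distribution places mass on a word absent from some document).

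First I would use the factored form \cref{eq:ind-fp-grad}, $\qh_i'(L) = z_i(L)^{L-1}\bigl[z_i(L)\ln z_i(L) - (1-z_i(L))\ln(1-z_i(L))\bigr]$ with $z_i(L) = 1 - \exp(-|W_i|L/B)$. Note $z_i(L)$ is strictly increasing in $L$ and, as established in the proof of \cref{lemma:fploss-lowerbound}, $z_i(L_i^{*}) = 1/2$; hence for $L > L_{\max}$ we get $z_i(L) \in (1/2, 1)$ for every $i$. The leading factor $z_i(L)^{L-1}$ is a (real) power of a positive number and is therefore positive, so the sign of $\qh_i'(L)$ equals the sign of $g(z) := z\ln z - (1-z)\ln(1-z)$ evaluated at $z = z_i(L)$.

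The core step is to prove $g(z) > 0$ on $(1/2,1)$. I would compute $g'(z) = \ln\!\bigl(z(1-z)\bigr) + 2$ and observe that $z(1-z)$ decreases strictly from $1/4$ to $0$ as $z$ runs over $(1/2,1)$, so $g'$ decreases strictly from $2 - 2\ln 2 > 0$ to $-\infty$ and changes sign exactly once, at some $z^{*}\in(1/2,1)$. Thus $g$ is strictly increasing on $(1/2, z^{*}]$ and strictly decreasing on $[z^{*}, 1)$. Since $g(1/2) = 0$ and $\lim_{z\to 1^-} g(z) = 0$ (using $\lim_{u\to 0^+} u\ln u = 0$), both endpoint values vanish, and a function that rises then falls between two zeros stays strictly positive in between. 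Plugging $z = z_i(L)$ back in gives $\qh_i'(L) > 0$ for all $i$ and all $L > L_{\max}$, whence $\fh(L) = \sum_i c_i \qh_i'(L) > 0$.

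The main obstacle is precisely the inequality $g(z) > 0$ on $(1/2,1)$: because $g$ is not monotone there, one cannot conclude from the sign of a single derivative, and the argument must combine the single-sign-change of $g'$ with the two vanishing boundary limits to pin down the ``rise-then-fall'' shape. The remaining ingredients --- monotonicity of $z_i(L)$ in $L$, positivity of $z_i(L)^{L-1}$, and the linearity of $\fh$ in the $\qh_i'$ --- are routine.
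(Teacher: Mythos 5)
Your proof is correct and follows the same skeleton as the paper's: both reduce $\fh(L)>0$ to termwise positivity of $\qh_i'(L)$ via the factorization in \cref{eq:ind-fp-grad} together with the observation that $L>L_i^{*}$ forces $z_i(L)>1/2$. The difference is at the crux, namely why the bracket $g(z)=z\ln z-(1-z)\ln(1-z)$ is positive for $z\in(1/2,1)$. The paper dispatches this with the one-line bound $\qh_i'(L)>2^{-L-1}\ln\frac{z_i(L)}{1-z_i(L)}>0$, asserted without justification; as a pointwise statement it is actually problematic, since $g(z)/\ln\frac{z}{1-z}\to 0$ as $z\to 1^-$ so no positive constant multiple of $\ln\frac{z}{1-z}$ lower-bounds $g$ on all of $(1/2,1)$, and an admissible instance such as $z_i(L)=0.6$ with $L=2$ (where $L_i^{*}\approx 1.51$) gives $\qh_i'(L)\approx 0.036 < 2^{-3}\ln 1.5\approx 0.051$. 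Your argument --- $g'(z)=\ln\bigl(z(1-z)\bigr)+2$ changes sign exactly once on $(1/2,1)$, and $g$ vanishes at both endpoints, so it is strictly positive in between --- is a complete proof of the needed positivity and effectively repairs this step. Your explicit caveat that at least one $c_i$ must be strictly positive for the strict inequality $\fh(L)>0$ is a fine point the paper also leaves implicit.
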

\begin{proof}
    If $L > L_i^{*}$, $z_i(L) > 1/2$. Together with \cref{eq:ind-fp-grad}, it implies that $\qh_i'(L) > 2^{-L-1} \ln \frac{z_i(L)}{1 - z_i(L)} > 0$.
\end{proof}

\begin{algorithm}[t]
  \caption{\small Number of Layers Minimization} 
  \label{algo:layer-opt}
  \small
  
  \DontPrintSemicolon

  \KwInput{Number of bins $B$, expected false positive $F_0$, sets of all distinct words $\{W_i\}_{i=1}^n$, query word distribution $\text{Cat}(W, \vec{p})$}
  \KwOutput{Minimum number of layer $L^{*}$ or rejection}

  \If{$\sum_{i=1}^n c_i 2^{-L_i^{*}} \leq F_0$} {
    \If{$F(L_{\min}) \leq F_0$} {
        $L^{*} \leftarrow$ binary search $L \in [1, L_{\min}]$
    } \ElseIf{iterative search $L \in [L_{\min}, L_{\max}]$ succeeds} {
        $L^{*} \leftarrow$ the result from iterative search
    }
  }
  return $L^{*}$ if assigned, otherwise reject
\end{algorithm}

\paragraph{False Positive Guarantee}
For fixed numbers of bins $B$ and layers $L$, each false positive from $i$-th document on irrelevant query word $w$ is a multiple of Bernoulli random variable, i.e. $x_{i, w} = p_w b_i$ where $b_i \sim \text{Bern}(q_i(L))$. Since $x_{i, w} \in [0, p_w]$ and $E[x_{i, w}] = p_w q_i(L)$, Hoeffding's inequality guarantees that the observed number of false positives $X = \sum_{i=1}^n \sum_{w \in W \setminus W_i} x_{i, w}$ does not deviate more than the expectation $E[X] = F(L)$ by $\varepsilon$ with probability at least $1 - \delta$.
\begin{equation} \label{eq:fploss-highp}
    \begin{split}
        \delta = \Pr[X \geq F(L) + \varepsilon]
            &\leq \exp \left( - 2 \varepsilon^2 / \sigma_X^2 \right)
    \end{split}
\end{equation}
where $\sigma_X^2 = \sum_{i=1}^n \sum_{w \in W \setminus W_i} p_w^2$. Thus, the deviation is bounded with $\varepsilon \leq \sqrt{ \frac{1}{2} \sigma_X^2 \ln \frac{1}{\delta} }$. In the worst case where very few query words are irrelevant to all documents and dominate the distribution ($p_w \rightarrow 1$) making $\sigma_X^2 \rightarrow n$, the deviation can possibly be as large as $\varepsilon \leq \sqrt{ \frac{n}{2} \ln \frac{1}{\delta} } = O(\sqrt{n})$; however, a memoization technique such as query caching suffices to solve this case. It is worth noting that, in a typical case when there are many irrelevant query words with similar probability, the deviation would instead shrink as the number of words increases: $\varepsilon \leq \sqrt{ \frac{n}{2 |W|} \ln \frac{1}{\delta} } = O\left(\sqrt{\frac{n}{|W|}}\right)$.

As a point of reference, \cref{tab:corpus-stats} summarizes corpus-dependent coefficients $\sigma_X$ for each corpus in the experiment assuming query words distribute uniformly among relevant words found in corpus. To reiterate, this analysis only ensures the concentration of the observed number of false positives around $F(L)$. The number of false positives itself can be reduced by adjusting IoU Sketch's structure $(B, L)$.

\subsection{Corpus Profiling}
\label{sec:airphant:profiling}

Although the optimization formulation factors in any categorical query word distribution $w$, \system assumes a uniform distribution by default; in other words, a query equally likely contains words in the corpus or $p_w = 1 / |W|$. While no further evidence support nor deny such choice, it is potentially simplistic. Other possible apparent choices with their profiling are: (a) $p_w = \text{occurrences}(w)$ by profiling word occurrences and total number of words, and (b) user-provided or statistical prior $p_w$ with no profiling. One might also consider assigning non-zero $p_{w'}$ where $w' \notin W$. We defer seeking more suitable distributions to future studies and let this implementation be a case study. Subsequently, \system Builder's profiling then counts the number of distinct words in the corpus as well as the numbers of distinct words within each document.

\subsection{Superpost Compaction} 
\label{section:doclist_compaction}
\label{sec:airphant:compaction}

\system Builder implements a simple superpost compaction to avoid creating too many tiny or a few huge files, and to allow single-cycle retrievals in \system Searcher. Previously hinted in earlier sections, the compaction comprises of two components: a header block and superpost blocks.

Each of the superpost blocks stores serialized multiple superposts consecutively. \system internally uses Protocol Buffers to serialize superposts to byte arrays. While indexing, \system Builder keeps track of each superpost location and builds the dictionary of bin pointers. Given the superpost block structure, each bin pointer need to represent block ID, offset, and byte length to retrieve the superpost's bytes in a single round-trip. In addition, \system compresses repeated strings within postings into integer keys. The compression reduces the number of bytes per superpost to be downloaded which speeds up query overall.

\system Builder persists these bin pointers and string compression table along side with hash seeds and other metadata in the header block. The hash seeds are collected from the hash function in IoU Sketch; in other words, they concisely represents IoU Sketch mapping. It is this header block that is loaded on \system Searcher initialization.

\subsection{Top-K Query}
\label{sec:airphant:top-k-query}

Instead of retrieving all relevant documents in a query, \system Searcher supports fetching at least $K$ relevant documents. Top-$K$ query enables pagination for providing a quick view or batch processing. Thanks to IoU Sketch's false positive guarantee, that is, the approximated superpost contains $F_0$ irrelevant documents on average, \system Searcher can sample a subset of the superpost to fetch from. Suppose the superpost contains $R$ postings, if $K \geq R - F_0$, then \system Searcher fetches all $R$ documents. Otherwise, each posting corresponds to a relevant document with Bernoulli distribution $Bern(p = 1 - F_0 / R)$. With probability at least $1 - \delta$, solving a quadratic inequality after applying Hoeffding's inequality guarantees that sampled postings of size $R_K$ (\cref{eq:topk_sample_size}) comprise of at least $K$ relevant documents.
\begin{equation} \label{eq:topk_sample_size}
    R_K = \left\lceil \frac{2 p K + \frac{1}{2} \ln \frac{1}{\delta} + \sqrt{(2 p K + \frac{1}{2} \ln \frac{1}{\delta})^2 - 4 p^2 K^2}}{2 p^2}  \right\rceil
\end{equation}

\subsection{Common Words}
\label{sec:airphant:stop-words}

Common words are keywords that are contained in many documents in the corpus. Some information retrieval systems assign them as stop words and filter them out during all retrieval steps. In contrary, \system supports searching for common words. The challenge is that, merging their large postings lists into IoU Sketch's bins would deteriorate performance on other keyword query as well. As a workaround, \system sets aside $1\%$ of the bins to store the exact postings lists of most common words. For example, if $B = 10^5$, \system would use $99,000$ bins for IoU Sketch and $1,000$ bins to carry $1,000$ most common words' postings lists. We use the same superpost compaction for these postings lists. 

\subsection{Applicable Queries}

Although IoU Sketch only natively supports queries on a single term, we can adapt it to accelerate other classes of queries. For one, like an inverted index, IoU Sketch naturally generalizes to Boolean queries \cite{Manning2008IntroIR}. Let $Q(w)$ be the superpost from querying IoU Sketch with word $w$. IoU Sketch executes any Boolean query by distributing its query function to each term predicate $Q(\bigvee_i \bigwedge_j w_{ij}) = \bigcup_i \bigcap_j Q(w_{ij})$. In particular, intersection and union operators apply to superposts where the former reduces false positives and the latter adds. Furthermore, regular expression (RegEx) can benefit from IoU Sketch as inverted index by considering indexing $N$-grams as shown in RegEx engines 
\cite{Qiu2020EfficienRegexPosInvInd} \cite{Cho2002RegexIndex}. These engines use an inverted index as a filter to avoid a full corpus scan, and later match the remaining documents with the RegEx to remove false positives. Hence, superpost's false positives do not affect the final correctness.

\subsection{Built-in Replication for Reliability}

During its query execution, IoU Sketch retrieves superposts from all $L$ layers simultaneously in parallel I/O requests. As a result, the slowest retrieval among $L$ requests defines the total query latency, exposing IoU Sketch to the Long Tail Problem if there is any extreme variation during the I/Os such as dormant storage or network congestion \cite{Gill2020TailsIT}. Nonetheless, IoU Sketch's multi-layer structure comes to the rescue as a built-in replication mechanism. The core idea is that IoU Sketch can simply discard the pending slow request and return the available-but-suboptimal superpost. The simplest mitigation is then to set a timeout before aborting the trailing request. Alternatively, IoU Sketch can enhance its reliability by overestimating the number layers, say $L_{+}$, adding magnitudes of accuracy as a consequence. The more layers IoU Sketch overestimates, the stronger its reliability against long-tail requests becomes. During a query, IoU Sketch would then submit $L_{+}$ I/O requests but only wait for any $L$ successful retrievals. We refer to \cite{Wang2019EfficientSR} for more sophisticated techniques and analysis frameworks.







\section{Experiments}
\label{sec:exp}

We have conducted empirical studies to evaluate \system's performance.
We observe that:

\begin{enumerate}
    \item \system outperforms Lucene, Elasticsearch, SQLite, and HashTable with upto $8.97\times$, $113.39\times$, $3.15\times$, and $378.59\times$ faster response respectively (\cref{sec:experiment:e2e}).
    
    \item \system competitively controls the effect of physical distance between compute and storage compared to baselines (\cref{sec:experiment:cross_region}).
    
    \item Our latency analysis shows that baselines are slower due to either time spent being blocked waiting for network response or downloading large amount of data. \system reduces both portions of time at the same time (\cref{sec:experiment:query_breakdown}).
    
    \item With the separation of compute and storage, \system is more cost-efficient than a Elasticsearch deployed with local persistence when the workload is skewed (high or rare peak throughput) and/or the size of indexed data is large (\cref{sec:experiment:cost_compare}).
    
    
    
    \item Our last experiments justify IoU Sketch analysis, showing that IoU Sketch provides a trade-off between memory usage and latency and that its structure optimization is crucial. (\cref{sec:experiment:trafeoff-iou})
\end{enumerate}

\begin{figure*}[t]
  \centering

\pgfplotsset{accfig/.style={
    width=0.85\linewidth,
    height=36mm,
    ybar,
    bar width=1.45mm,
    xmin=0.5,
    xmax=7.5,
    xtick={1, 2, ..., 7},
    xticklabels={\texttt{diag(8,8,0)}, \texttt{uniform(8,8,1)}, \texttt{zipf(8,8,1)}, \texttt{Cranfield}, \texttt{HDFS}, \texttt{Windows}, \texttt{Spark}},
    ylabel near ticks,
    ylabel style={align=center},
    ymin=0,
    ymax=1100,
    ytick={0, 200, ..., 1100},
    xticklabel style={yshift=2mm},
    xlabel style={yshift=0mm},
    legend style={
        at={(1.02,1.0)},anchor=north west,column sep=2pt,
        draw=black,fill=white,line width=.5pt,
        /tikz/every even column/.append style={column sep=5pt}
    },
    legend cell align={left},
    legend columns=1,
    area legend,
    clip=false,
    every axis/.append style={font=\scriptsize},
    minor grid style=lightgray,
    ymajorgrids,
}}

    \begin{tikzpicture}

    \def\yclip{1200}
    \def\ycliptop{1200}
    \def\ycliplowerlower{1050}
    \def\ycliplower{1075}
    \def\yclipupper{1125}
    \def\yclipupperupper{1150}
    \newcommand\xmid[2]{#1 + #2}
    \newcommand\xlower[3]{\xmid{#1}{#2} - #3}
    \newcommand\xupper[3]{\xmid{#1}{#2} + #3}
    
    \newcommand{\slabClipNumber}[4]{%
        \draw[fill=white,draw=white] (axis cs: \xlower{#1}{#2}{#3},\yclipupper) -- (axis cs: \xupper{#1}{#2}{#3},\yclipupperupper) -- (axis cs: \xupper{#1}{#2}{#3},\ycliplower) -- (axis cs: \xlower{#1}{#2}{#3},\ycliplowerlower) -- cycle;%
        \draw[draw=black] (axis cs: \xlower{#1}{#2}{#3},\yclipupper) -- (axis cs: \xupper{#1}{#2}{#3},\yclipupperupper);%
        \draw[draw=black] (axis cs: \xupper{#1}{#2}{#3},\ycliplower) -- (axis cs: \xlower{#1}{#2}{#3},\ycliplowerlower);%
        \node[anchor=west, rotate=90] at (axis cs: \xmid{#1}{#2},\ycliptop) {#4};%
    }

    \begin{axis}[accfig,
        ylabel=Search Latency (ms)]
    
    \addplot[thick,draw=vintageblack,fill=vintageblack,
    postaction={
        pattern=north east lines,
        pattern color=white,
    }]
    plot [error bars/.cd, y dir=plus,y explicit,error bar style={color=black}]
    table[x=x,y=y,y error expr=\thisrowno{3}-\thisrowno{2}] {
    x c y    y+
    1 0 167.8090757 369.065725
    2 0 {\yclip}  {\yclip}
    3 0 {\yclip}  {\yclip}
    4 0 1.681455812 33.711669
    5 0 288.4062741 1016.93374
    6 0 254.8345557 {\yclip}
    7 0 426.2369877 {\yclip}
    };
    \slabClipNumber{2}{-0.22}{0.05}{2696};
    \slabClipNumber{3}{-0.22}{0.05}{1306};
    \slabClipNumber{6}{-0.22}{0.05}{3885};
    \slabClipNumber{7}{-0.22}{0.05}{2207};
    
    \addplot[thick,draw=vintageblack,fill=vintageblack]
    plot [error bars/.cd, y dir=plus,y explicit,error bar style={color=black}]
    table[x=x,y=y,y error expr=\thisrowno{3}-\thisrowno{2}] {
    x c y    y+

    1 1 143.99644 {\yclip}
    2 1 {\yclip}  {\yclip}
    3 1 {\yclip}  {\yclip}
    4 1 18.14514992 57.571147
    5 1 208.6094791 902.470908
    6 1 {\yclip}  {\yclip}
    7 1 228.580414  {\yclip}
    };
    \slabClipNumber{1}{-0.11}{0.05}{1703};
    \slabClipNumber{2}{-0.11}{0.05}{34204};
    \slabClipNumber{3}{-0.11}{0.05}{3178};
    \slabClipNumber{6}{-0.11}{0.05}{1655};
    \slabClipNumber{7}{-0.11}{0.05}{1697};
    
    \addplot[thick,draw=vintageblack!50,fill=vintageblack!50,]
    plot [error bars/.cd, y dir=plus,y explicit,error bar style={color=black}]
    table[x=x,y=y,y error expr=\thisrowno{3}-\thisrowno{2}] {
    x c y    y+
    1 2 359.275471  764.487312
    2 2 518.7315046 1081.898583
    3 2 387.5628978 746.111403
    4 2 15.06970779 160.938858
    5 2 252.9398227 694.395288
    6 2 242.0121507 618.240363
    7 2 267.5795218 632.960519
    };
    
    \addplot[thick,draw=vintageblack!50,fill=vintageblack!50,
    postaction={
        pattern=north east lines,
        pattern color=black,
    }]
    plot [error bars/.cd, y dir=plus,y explicit,error bar style={color=black}]
    table[x=x,y=y,y error expr=\thisrowno{3}-\thisrowno{2}] {
    x c y    y+
    1 3 {\yclip}  {\yclip}
    2 3 {\yclip}  {\yclip}
    3 3 {\yclip}  {\yclip}
    4 3 15.47910109 127.370042
    5 3 423.2580287 1071.319562
    6 3 {\yclip}  {\yclip}
    7 3 {\yclip}  {\yclip}
    };
    \slabClipNumber{1}{0.11}{0.05}{2864};
    \slabClipNumber{2}{0.11}{0.05}{113871};
    \slabClipNumber{3}{0.11}{0.05}{36396};
    \slabClipNumber{6}{0.11}{0.05}{6781};
    \slabClipNumber{7}{0.11}{0.05}{3555};
    
    
    \addplot[thick,draw=vintageorange,fill=vintageorange,]
    plot [error bars/.cd, y dir=plus,y explicit,error bar style={color=black}]
    table[x=x,y=y,y error expr=\thisrowno{3}-\thisrowno{2}] {
    x c y    y+
    1 4 114.034835  258.821839
    2 4 300.7736378 761.661508
    3 4 215.1467519 511.057993
    4 4 13.4467168  105.530441
    5 4 114.439757  242.633805
    6 4 175.3187697 477.434309
    7 4 140.4061745 373.191297
    };

    \addlegendentry{Lucene}
    \addlegendentry{Elasticsearch}
    \addlegendentry{SQLite}
    \addlegendentry{HashTable}
    \addlegendentry{\system}
      
    \end{axis}
    \end{tikzpicture}
    

  \vspace{-2mm}
  \caption{End-to-end search latencies of indexes on different datasets. 
  Solid bars show average latencies; the upper error bars show 99th percentiles. Latencies over 1.1 seconds are truncated and labeled. 99 percentiles whose means are over the limit are omitted.}
  
  \label{fig:e2e_search}
  \vspace{-3mm}
\end{figure*}

\subsection{Environment Setup}
\label{sec:experiment:setup}

Our experiments used Google Cloud Platform. 
It uses GCP Cloud Storage~\cite{gcp-storage} as the cloud storage for index structure persistence. To provide file-system interface for all benchmarks, we connect all necessary storage buckets to a directory using Cloud Storage FUSE~\cite{gcp-fuse} (\texttt{gcsfuse}) adaptor with no limit on rate of operations. We allocate two VM instances \texttt{n2-highmem-32} (32 vCPUs, 256 GB memory, 1 TB SSD persistent disk) and \texttt{e2-small} (2 vCPUs, 2 GB memory, 10 GB default boot disk) for indexing and query benchmarking respectively. All experiments run from September to November 2021.

\begin{table}
\caption{Corpus Statistics. \#documents: number of documents. \#terms: number of distinct words. \#words: total number of words across all documents. $\sigma_X$: corpus-dependent coefficient discussed in \cref{section:iou-sketch}. }
    \label{tab:corpus-stats}
    
    \centering
    \small
    \begin{tabular}{lrrrr}
        \toprule
        Corpus & \#documents & \#terms & \#words & $\sigma_X$ \\
        \midrule
        \texttt{diag(8,8,0)} & $10^8$ & $10^8$ & $10^8$ & $1.00$ \\
        \texttt{unif(8,8,1)} & $10^8$ & $1.0 \times 10^8$ & $1.0 \times 10^{9}$ & $1.00$ \\
        \texttt{zipf(8,8,1)} & $10^8$ & $5.0 \times 10^7$ & $9.5 \times 10^{8}$ & $1.41$ \\
        \texttt{Cranfield} & $1.4 \times 10^3$ & $5.3 \times 10^3$ & $1.2 \times 10^{5}$ & $0.51$ \\
        \texttt{HDFS} & $1.1 \times 10^7$ & $3.6 \times 10^6$ & $1.4 \times 10^{8}$ & $1.77$ \\
        \texttt{Windows} & $1.1 \times 10^8$ & $8.3 \times 10^5$ & $1.7 \times 10^{9}$ & $11.73$ \\
        \texttt{Spark} & $3.3 \times 10^7$ & $5.2 \times 10^6$ & $3.5 \times 10^{8}$ & $2.53$ \\
        \bottomrule
    \end{tabular}

\vspace{-2mm}
\end{table}

\paragraph{Datasets} There are 4 corpuses to benchmark these index systems. \texttt{Cranfield} (Cranfield 1400)~\cite{Cleverdon1967} is a small corpus of 1398 documents, each contains abstracts from aerodynamics research papers. \texttt{HDFS}~\cite{Xu2009}, \texttt{Windows}, and \texttt{Spark} are system logs in their corresponding systems collected by Loghub~\cite{He2020}.

In addition, we use 3 types of synthetic datasets whose size is configurable. We denote the size of each synthetic dataset using a tuple $(\log_{10} n_d, \log_{10} n_w, \log_{10} n_l)$ for its numbers of documents $n_d$, words $n_w$, and words per documents $n_l$. \texttt{diag} is a dataset where each document $i$ contains only one word $w_{i}$. As a result, \texttt{diag} always has $n_l = 1$. \texttt{unif} is a dataset where each word in a document is uniformly sampled from the $n_w$-word dictionary. \texttt{zipf} is similar to \texttt{unif} but uses a Zipfian distribution with the exponent equal to $1.07$. In other word, each word in a document is equal to $w_j$ with probability proportional to $1 / j^{1.07}$. Note that \texttt{unif} and \texttt{zipf} can under-generate the actual set of distinct words form $n_w$ due to Coupon collector's problem~\cite{Flajolet1992}. \cref{tab:corpus-stats} summarizes corpuses' statistics that are used in \cref{sec:experiment:e2e}.

\paragraph{Baselines}

    We compare \system to Lucene 7.4.0~\cite{lucene}, Elasticsearch 7.15.1~\cite{elasticsearch}, SQLite 3.34.0~\cite{sqlite}, and na\"ive hash table.
    Lucene is an information retrieval library. 
    Elasticsearch is a search and analytics engine. 
    We benchmark their efficiency in matching exact keywords. 
    We first parse keywords out similarly to other baselines and feed parsed documents to a text field using Elasticsearch's \texttt{whitespace} analyzer and Lucene's \texttt{WhitespaceAnalyzer} along with the document's posting. To benchmark Elasticsearch, we mount a Searchable Snapshot~\cite{elasticsearchSearchableSnapshot} onto an Elasticsearch empty instance to speed up its initial latency. 

    SQLite is a light database we choose as a practical B-tree implementation. We first create a two-column table consisting of keyword column and postings column to mimic the inverted index dictionary. We then build SQLite's B-tree index~\cite{SQLit2021Btree} on the keyword column as its term index and store its database file on the cloud-mounted directory. In each query, after retrieving the postings, SQLite reuses the same document retrieval routine from \system.
    
    Lastly, HashTable refers to an inverted index that stores postings lists according to their corresponding terms' hashes. It is equivalent to IoU Sketch with the only exception that it has a single layer $L = 1$. Other relevant configurations such as the total number of bins and common word bins are identical to IoU Sketch.
    All postings inserted in all baselines are compressed in the same way as in \system (\cref{sec:airphant:compaction}).



\paragraph{Parameters} If otherwise specified, we set the number of bins $B = 10^5$, accuracy constraint $F_0 = 1$, and probability of top-$K$ query failure $\delta = 10^{-6}$ for $K = 10$. Remind that we allocate 1\% of the bins to store postings lists of most common words. We similarly set top-$K$ query to other baselines accordingly. The top-$K$ query failure never occurs during the experiment due to the conservative setting which selects about $23$ samples to answer top-$10$ query. The accuracy choice leads to optimal number of layers $L^{*}$ in the at most 3 layers depending on the corpus. 
The number of bins results in \system Search's runtime size about $2$ MB.
We use $32$ threads to concurrently read any data from the cloud storage.

\subsection{End-to-end Search Performance}

\paragraph{Within Region}
\label{sec:experiment:e2e}

    To demonstrate \system's performance, we measure end-to-end search latency for each query. \cref{fig:e2e_search} shows mean and 99th-percentile latencies.
    \system query executions are $1.45\times$ to $8.97\times$ faster than Lucene's on average (except \texttt{Cranfield} dataset where Lucene is $8.00\times$ faster), $1.09\times$ to $113.39\times$ faster than Elasticsearch's, $1.12\times$ to $3.15\times$ faster than SQLite's, and $1.15\times$ to $378.59\times$ faster than HashTable's. 
    HashTable is slow because it
    spends the majority of its latency to filter out false-positive documents, which highlights the strength of IoU Sketch's multi-layer structure. 

    From these benchmarks, we see that \system operates at less than 300 ms on average. According to~\cite{Brutlag2008UserPA, Arapakis2021ImpactOR}, web search users start to notice the latency when it is on the order of seconds, and so, they are tolerable with \system latency. To put its latency into a perspective, \system would only add a factor to the 250 ms ``speed-of-light web search latency''.

\paragraph{Cross Region}
\label{sec:experiment:cross_region}

\begin{figure}[t]
  \centering

\pgfplotsset{
    accfig/.style={
        width=0.75\linewidth,
        height=36mm,
        ybar,
        bar width=1.4mm,
        xmin=0.5,
        xmax=3.5,
        xtick={1, 2, 3},
        xticklabels={
            \texttt{us-central1-c}, 
            \texttt{europe-west2-c}, 
            \texttt{asia-southeast1-b}, 
        },
        xticklabel style={rotate=10},
        ylabel near ticks,
        ylabel style={align=center},
        ymin=0,
        ymax=4400,
        ytick={0, 1000, ..., 4000},
        xticklabel style={yshift=2mm},
        xlabel style={yshift=0mm},
        legend style={
            at={(1.02,1.00)},anchor=north west,column sep=2pt,
            draw=black,fill=white,line width=.5pt,
            /tikz/every even column/.append style={column sep=5pt}
        },
        legend cell align={left},
        legend columns=1,
        area legend,
        clip=false,
        every axis/.append style={font=\scriptsize},
        /pgfplots/log origin=infty,
        minor tick num=1,
        minor grid style=lightgray,
        ymajorgrids,
        yminorgrids,
        },
    accfig/.belongs to family=/pgfplots/scale,
}
    \vspace{-9mm}

    \begin{tikzpicture}

    \def\yclip{4800}
    \def\ycliptop{4800}
    \def\ycliplowerlower{4200}
    \def\ycliplower{4300}
    \def\yclipupper{4500}
    \def\yclipupperupper{4600}
    \newcommand\xmid[2]{#1 + #2}
    \newcommand\xlower[3]{\xmid{#1}{#2} - #3}
    \newcommand\xupper[3]{\xmid{#1}{#2} + #3}
    
    \newcommand{\slabClipNumber}[4]{%
        \draw[fill=white,draw=white] (axis cs: \xlower{#1}{#2}{#3},\yclipupper) -- (axis cs: \xupper{#1}{#2}{#3},\yclipupperupper) -- (axis cs: \xupper{#1}{#2}{#3},\ycliplower) -- (axis cs: \xlower{#1}{#2}{#3},\ycliplowerlower) -- cycle;%
        \draw[draw=black] (axis cs: \xlower{#1}{#2}{#3},\yclipupper) -- (axis cs: \xupper{#1}{#2}{#3},\yclipupperupper);%
        \draw[draw=black] (axis cs: \xupper{#1}{#2}{#3},\ycliplower) -- (axis cs: \xlower{#1}{#2}{#3},\ycliplowerlower);%
        \node[anchor=west, rotate=90] at (axis cs: \xmid{#1}{#2},\ycliptop) {#4};%
    }
    
    \begin{axis}[accfig,
        ylabel=Search Latency (ms)]
    
    \addplot[thick,draw=vintageblack,fill=vintageblack,
    postaction={
        pattern=north east lines,
        pattern color=white,
    }]
    plot [error bars/.cd, y dir=plus,y explicit,error bar style={color=black}]
    table[x=x,y=y,y error expr=\thisrowno{3}-\thisrowno{2}] {
    x c y    y+
    1 0 254.8345557 3885.705658
    2 0 839.7534547 {\yclip}
    3 0 2099.758349 {\yclip}
    };
    \slabClipNumber{2}{-0.25}{0.05}{14117};
    \slabClipNumber{3}{-0.25}{0.05}{37431};

    \addplot[thick,draw=vintageblack,fill=vintageblack]
    plot [error bars/.cd, y dir=plus,y explicit,error bar style={color=black}]
    table[x=x,y=y,y error expr=\thisrowno{3}-\thisrowno{2}] {
    x c y    y+
    1 1 1655.994508 {\yclip}
    2 1 1643.60251  {\yclip}
    3 1 2128.250344 {\yclip}
    };
    \slabClipNumber{1}{-0.125}{0.05}{16474};
    \slabClipNumber{2}{-0.125}{0.05}{16817};
    \slabClipNumber{3}{-0.125}{0.05}{14688};
    
    \addplot[thick,draw=vintageblack!50,fill=vintageblack!50,]
    plot [error bars/.cd, y dir=plus,y explicit,error bar style={color=black}]
    table[x=x,y=y,y error expr=\thisrowno{3}-\thisrowno{2}] {
    x c y    y+
    1 2 242.0121507 618.240363
    2 2 782.4309524 1628.206049
    3 2 1943.175542 4208.20374
    };
    
    \addplot[thick,draw=vintageblack!50,fill=vintageblack!50,
    postaction={
        pattern=north east lines,
        pattern color=black,
    }]
    plot [error bars/.cd, y dir=plus,y explicit,error bar style={color=black}]
    table[x=x,y=y,y error expr=\thisrowno{3}-\thisrowno{2}] {
    x c y    y+
    1 3 {\yclip} {\yclip}
    2 3 {\yclip} {\yclip}
    3 3 {\yclip} {\yclip}
    };
    \slabClipNumber{1}{0.125}{0.05}{6781};
    \slabClipNumber{2}{0.125}{0.05}{7817};
    \slabClipNumber{3}{0.125}{0.05}{10545};
    
    
    \addplot[thick,draw=vintageorange,fill=vintageorange,]
    plot [error bars/.cd, y dir=plus,y explicit,error bar style={color=black}]
    table[x=x,y=y,y error expr=\thisrowno{3}-\thisrowno{2}] {
    x c y    y+
    1 4 175.3187697 477.434309
    2 4 414.801396  1070.681104
    3 4 1132.139502 2783.074737
    };

    \addlegendentry{Lucene}
    \addlegendentry{Elasticsearch}
    \addlegendentry{SQLite}
    \addlegendentry{HashTable}
    \addlegendentry{\system}
      
    \end{axis}
    \end{tikzpicture}
    

  \vspace{-2mm}
  \caption{End-to-end search latencies across regions of indexes.
  The \texttt{Windows} dataset was used. 
  Solid bars show average latencies; the upper error bars show 99th percentiles.
  of measured latencies. 
  }
  
  \label{fig:e2e_search_region}
  \vspace{-3mm}
\end{figure}
    

Separation of compute and storage allows \system to host each component in different physical locations, even across continents. To construct such a scenario, we allocate a cloud storage in the default \texttt{US} (multi-region US), while hosting VMs at \texttt{us-central1-c} (Iowa), \texttt{europe-west2-c} (London), and \texttt{asia-southeast1-b} (Singapore). Because we observe similar performance patterns across all datasets, \cref{fig:e2e_search_region} only presents \texttt{Windows} as a representative. It comes as no surprise that each method takes longer as its VM moves further from the data. Among those baselines with competitive latencies, Lucene is $3.3\times$ and $8.2\times$ slower and SQLite is $3.2\times$ and $8.0\times$ slower in London and Singapore. In contrast, \system achieves a milder slowdown: $2.4\times$ and $6.5\times$ ($3.3\times$ and $6.8\times$ across all datasets, not shown). 
Elasticsearch and HashTable, on the other hand, are consistently slower than others, across different regions:
Elasticsearch spends much time in mounting its searchable snapshots; HashTable reads too many false positives.

\paragraph{Latency Breakdown}
\label{sec:experiment:query_breakdown}

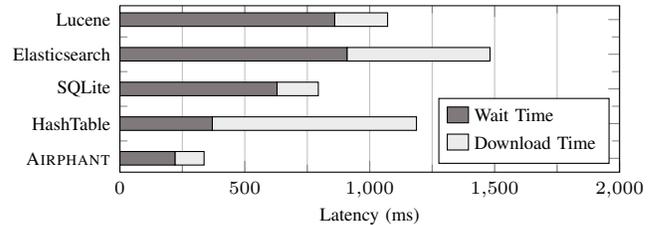
\begin{figure}[t]

\pgfplotstableread{
method low high
lucene 859.6774193548387 211.93548387096777
elasticsearch 909.0 572.5
sqlite 629.032258064516 165.1612903225806
hashtable 370.0 817.0967741935484
iou0 220.33333333333331 116.66666666666666
}\dataTransferTable

\centering
\begin{tikzpicture}
    \begin{axis}[
        xbar stacked,
        scale only axis,
        bar width=1.8mm,
        width=0.75\linewidth,
        height=0.25\linewidth,
        xmin=0,
        xmax=2000,
        xtick={0, 500, ..., 2000},
        minor tick num=1,
        xmajorgrids,
        xminorgrids,
        y dir=reverse,
        ytick=data,
        xlabel near ticks,
        ylabel near ticks,
        xlabel=Latency (ms),
        xlabel shift=-1mm,
        yticklabels={Lucene,Elasticsearch,SQLite,HashTable,\system},
        every axis/.append style={font=\scriptsize},
        legend style={
            font=\scriptsize,
            at={(0.98,0.05)}, anchor=south east,
            draw=black,
            fill=white
        },
        legend cell align={left},
    ]
        \addplot [
            fill=vintageblack!70
        ] table [x=low, meta=method,y expr=\coordindex] {\dataTransferTable};
        \addplot [
            fill=vintageblack!10,
        ] table [x=high, meta=method,y expr=\coordindex] {\dataTransferTable};
        \legend{Wait Time,Download Time}
    \end{axis}
\end{tikzpicture}
\vspace{-2mm}

\caption{Search latency breakdown in terms of network communication on the \texttt{Spark} dataset. The total search latency is equal to the summation of wait time and download time.}
\vspace{-2mm}

\label{fig:data_transfer}
\end{figure}





To dig deeper, we study data movement patterns from the perspective of network communication via TCP packets captured by \texttt{tcpdump}\footnote{\url{https://www.tcpdump.org}}. We sample 32 queries from each method, capture packets on appropriate ports. 
We then compute two metrics per query: waiting time and download time.\footnote{The former includes the time when the bandwidth usage is less than $10$ kB/s while the latter includes the rest where the traffic is high.} We calculate bandwidth usage based on moving averages of the size of packets with window size of $10$ ms. \cref{fig:data_transfer} depicts the result of this procedure. Note that the latency is slower than those in \cref{fig:e2e_search} partly due to \texttt{tcpdump} and a smaller number of query samples. There are two opposite patterns among baselines. 
First, as we have mentioned, HashTable's latency dominantly consists of time to download false-positive documents. 
Secondly and conversely, Lucene and SQLite tend to wait for their dependent reads, i.e. B-tree or skip list traversal. 
\system minimizes such dependency thanks to the independence among IoU Sketch's layers, reducing wait time as a result. \system also improves download time because of its parallel I/O reads which utilizes more network bandwidth. \system spends 220ms waiting and 117ms downloading on average where its breakdown distribution concentrates around.

\subsection{Cost Comparison}
\label{sec:experiment:cost_compare}

    This experiment compares costs between the two architecture paradigms: coupled (compute and storage adjacency) and decoupled (compute and storage separation). To this end, we consider the \textit{peak-trough} workload that elucidates the trade-off between the two paradigms. The peak-trough workload is motivated by the periodically varying utilization commonly found in web services. As the name suggests, it consists of two parts: peak and trough. Peak refers to the time where the workload is higher, whereas trough's workload is lower. Because our following cost formulae are linear on the amount of time, we can identify a peak-trough instance with $(A, a, \tau)$ where peak and trough cover $\tau$ and $1 - \tau$ fraction of time with workload $A$ and $a$ ops/s respectively.
    
    Furthermore, we also consider the total number of corpora. For simplicity, we consider a collection of corpora whose document-word pairs are distributed similarly to \texttt{Windows}. We denote $S$ as the total size of original data in bytes. Estimated from their storage usage in \texttt{Windows}, \system uses $1.008 \times S$ bytes, while Elasticsearch has a better compression rate and only uses $0.3316 \times S$ bytes of storage.
    
    Since \system can scale up/down easily as per workloads, \system costs proportionally to the workload over time: $O(a_p t + a_t (1 - t))$. From previous performance results, \system operates at $175$ ms/op or $5.71$ ops/s. 
    Deploying \system on \texttt{e2-small} VM costs \$13.23/month,
    and storing its index on GCP Cloud Storage costs \$0.02/GB/month.
    
    
    In contrast, Elasticsearch cannot automatically scale down without rebalancing its index over remaining servers; as a result, it requires the resource to handle the peak workload at all times, amounting to $O(a_p)$ cost. Nonetheless, we optimistically assume that Elasticsearch's sharding and load balancing are perfect such that its throughput scales linearly with the number of servers without shard replication. We deploy Elasticsearch on \texttt{e2-medium} VM (\$26.46/month) and store its index on local disk (\$0.2/GB/month). This achieves 6.49 ms/op and is the most cost-efficient among the options we explored.
    

\begin{figure}[t]
\vspace{-2mm}
    \centering
    \begin{tikzpicture}
    
    \def\Xmin{0.0}
    \def\Xmax{100.0}
    \def\DataSize{28012696901}              
    \def\Tarray{16, 8, 4, 2, 1}
    \def\Tmin{1}
    \def\Tmax{16}
    
    \begin{axis}[
        scale only axis,
        width=0.6\linewidth,
        height=0.26\linewidth,
        xlabel=Fraction of Peak Time $\tau$,
        ylabel=Relative Cost $C_E/C_A$,
        xlabel near ticks,
        ylabel near ticks,
        every axis/.append style={font=\scriptsize},
        xmin=\Xmin,
        xmax=\Xmax,
        domain=\Xmin:\Xmax,
        ymin=0.25,
        ymax=4,
        ymode=log,
        colormap/viridis,
        xtick={0, 20, 40, 60, 80, 100},
        xticklabels={0\%, 20\%, 40\%, 60\%, 80\%, 100\%},
        ytick={0.25, 0.5, 1, 2, 4},
        yticklabels={1/4, 1/2, 1, 2, 4},
        yticklabel style={
            align=right,
            inner sep=0pt,
            xshift=-0.1cm
        },
        ymajorgrids,
        minor grid style=lightgray,
        legend cell align={left},
        legend style={
            font=\scriptsize,
            at={(1.05,1.0)}, anchor=north west,
            draw=black,
            fill=white
        },
        colormap/jet,
        colorT/.style={%
            /utils/exec={%
                \PgfmathparseFPU{#1}%
                \pgfplotscolormapdefinemappedcolor{\pgfmathresult}%
            },%
            color=mapped color%
        }], 
    ]
    
    \addlegendimage{empty legend}
    \addlegendentry{\hspace{-0.4cm}Size $N$}
    
    \foreach \T in \Tarray {
        \edef\temp{\noexpand\addplot 
        [
            thick,
            domain=\Xmin:\Xmax,
            samples=50, 
            colorT={(ln(\T)-ln(\Tmin))*(1000/(ln(\Tmax)-ln(\Tmin)))},
        ]{ 
            ((26.46 * 6.49 * 0.15408) + 0.0000000002 * (9287623285 / \DataSize) * \T * 1000000000000)
            /
            (13.23 * 175.3187697 * ((1 - x/100)*0.007704 + (x/100)*0.15408) + 0.00000000002 * (28245668027 / \DataSize) * \T * 1000000000000)
        };
        \noexpand\addlegendentryexpanded{$\T$ TB}
        }
        \temp
    }
    
    \addplot[thick]{1.0};
        
    \end{axis}
    \end{tikzpicture}
    
    \vspace{-2mm}

    \caption{Relative cost between local Elasticsearch and cloud-stored \system in terms of fraction of peak time $\tau$ and size of indexed data $N$ (increasing from bottom to top lines). Peak and trough workloads are fixed to $A = 154.08$ op/s (throughput of a single Elasticsearch server) and $a = A / 20 = 7.704$ op/s.}
    \label{fig:cost_compare}
  
  \vspace{-2mm}
\end{figure}

\input{figures/exp_iou_tradeoff}
    
\cref{fig:cost_compare} shows a slice of the relative cost $C_E / C_A$ at fixed $A$ and $a$. Overall, \system costs less if the total corpus size $N$ is larger and/or the fraction of peak time $\tau$ is smaller. In fact, we would asymptotically save $\lim_{N \rightarrow \infty}
    C_E/C_A
    \approx 3.29$ times of the cost of coupled Elasticsearch. Also, focusing on the VM cost, \system's cost would be 
    $A / (13.48 a)$ times over Elasticsearch's. A relatively higher peak workload $A > 13.48 a$ implies that a relatively lower cost to deploy \system, and vice versa. These trends align with the intuition: the decouple paradigm is more cost-efficient than coupled one when 1) the data is large, 2) the workload is concentrated over peak times, and/or 3) the peak's workload is high compared to trough's.

\subsection{IoU Sketch Structure}
\label{sec:experiment:trafeoff-iou}




We use \texttt{HDFS} to explore $B \in \{50k, 100k, 200k, 400k \}$ around the configuration used in \cref{sec:experiment:e2e} and a wide range of $L \in \{1, 2, 4, 6, 8, \dots, 16 \}$. Recall that $L = 1$ corresponds to the na\"ive hash table index structure and note that our optimizer selects $L^{*} = 2$.
\cref{fig:tradeoff:false-positives} confirms our analysis which predicts a rapid error reduction over $L$ given a sufficient $B$; the observed average numbers of false positives are enormous at $L=1$, become less than $1$ at $L=2$, and are exactly zero after $L=4$. Such high numbers of false positives in $L=1$ result in a higher search latency (\cref{fig:tradeoff:search-latency}) due to heavy filtering demands. The rest of the search latency trend correlates closely with the term lookup latency (\cref{fig:tradeoff:lookup-latency}); when $L$ increases, so does the lookup latency. Even though \system fetches superposts in layers separately, a bandwidth contention drives up the lookup latency and search latency as a result.



\section{Related Work}

\subsection{Data Systems in the Cloud}

Cloud services provide a convenient interface to share computing resources. As one of its most important impact, the cost of scaling up or down is now low enough that elasticity, or ability to scale to meet the demand, is a desirable feature of a system. Some work achieves elasticity through separation of compute and storage. Particularly, applications that are functional or can be partitioned into smaller functional modules enjoy elasticity by FaaS \cite{Schleier2019}. Apart from web serving and file processing, \texttt{gg} framework \cite{Fouladi2019} empowers FaaS execution for commonly local applications, for example, video encoding, object recognition, unit testing. Towards FaaS-based query engine, Starling \cite{Perron2020} partitions a query into fine-grained tasks on operation level such as table scanning, joining, and shuffling. \system partitions a functional searching into \system Searcher, and persists index data structure stored on cloud storage. In this way, \system can also benefit from FaaS elasticity.




\subsection{Index Data Structures}
\label{sec:indexing}

\paragraph{Sketches}

\cite{Cormode2011} lays out numerous sketch techniques. Most notably similar sketches to IoU Sketch sketches are count-min sketch \cite{Cormode2005} and Bloom filter \cite{Bloom1970}. IoU Sketch and count-min sketch both maintain a multi-layer hash table and answer queries using aggregations. Bloom filter's accuracy formulation is similar to IoU Sketch's counterpart; however, IoU Sketch focuses on maximizing its retrieval performance using its accuracy term as one of the constraints while Bloom filter directly maximizes its accuracy. Moreover, Bloom filter has a closed-form optimal configuration, but IoU Sketch's can contain multiple local optima.

\paragraph{Hash Tables}

Open addressing probes for available entries to insert colliding values accordingly to a mechanism, for example,
Hopscotch \cite{Herlihy2008}, SmartCuckoo \cite{Sun2019}, and MinCounter \cite{Sun2015}. Open addressing performs well when load factor is low \cite{Richter2016}, which is infeasible with a memory constraint.

Unlike B-trees, hash table is popular in distributed peer-to-peer systems, especially for point-query indexing. Each solution differs from others based on its query routing \cite{Balakrishnan2003}, such as tree-like (Pastry \cite{Rowstron2001}, Tapestry \cite{Hildrum2002}, Kademlia \cite{Maymounkov2002}) or skip-list-like (Chord \cite{Stoica2001}, Koorde \cite{Kaashoek2003}). A keyword searching application can rely on these systems to store and retrieve postings lists; however, it would require long-running servers equipped with storage, as opposed to our serverless system.

\paragraph{Hierarchical Indexes}

    B-tree \cite{Bayer1972} is widely used due to its efficiency, self-balancing, and cheap reorganization \cite{Comer1979}. Many B-tree variants and techniques optimize B-tree across memory levels and storage mediums, e.g. CPU cache \cite{Graefe2001}, SRAM cache \cite{Chen2001}, disk \cite{Chen2002}, SSD \cite{Jin2011}, NVMe \cite{Wang2020}, and distributed systems \cite{Wu2010, Zhou2014, Bin2014}.
    
    Skip list is another index structure based on linked list with skip connections.
    Like B-trees, many skip lists are optimized to various settings, such as multi-core \cite{Dick2017}, cache-sensitive for range queries \cite{Sprenger2017}, non-uniform access \cite{Daly2018}, and distributed nodes \cite{He2018}. Skip list can be combined with other data structures as well: hash table \cite{Stoica2001, Kaashoek2003}, search tree \cite{Zhang2018}, multi-dimensional octree~\cite{He2016}, and compressed perfect skip list for inverted index \cite{Boldi2005}.
    
    However, these hierarchical index structure can be inefficient in our cloud setting when corpus size is large. Although \cite{Graefe2010} suggests increasing the effective fan-out, a server with limited memory still requires multiple sequential round-trips. A workaround is to increase the server's memory capacity, thereby incurring extra costs. \system achieves single round-trip per query even on a memory-constrained node.

\subsection{Approximation Techniques for Data Systems}

Besides sketching (\cref{sec:indexing}), approximation and 
learning-based techniques have been proposed to speed up 
large-scale data analytics~\cite{park2017database,kraska2019sagedb,park2018verdictdb,he2018demonstration,bater2020saqe,park2019blinkml,park2016visualization}, 
selectivity estimation~\cite{yang13deep,marcus12neo,park2020quicksel,kipf2018learned},
etc.~and 
to automate database tuning~\cite{zhang2019end,van2017automatic,zhu2017bestconfig}, etc.
In contrast, \system develops a novel statistical inverted index for document stores.

\section{Conclusion}

This work presents \system, a search engine developed to achieve low end-to-end query latencies
under the separation of compute and storage.
While \system keeps almost the entire data on cloud storage (both inverted index and original documents),
it delivers search results within 300 milliseconds.
This is a significant improvement compared to placing existing search engines directly on top of cloud storage.
At its core, \system relies on our new statistical inverted indexing technique called IoU Sketch.
Unlike many other existing indexes,
IoU Sketch does not require sequential back-to-back communications with storage devices.
With IoU Sketch, we can make a single batch of concurrent communications, which significantly lowers the end-to-end wall clock time
in obtaining relevant documents (i.e., the documents containing search keywords).
We plan to open-source this technology (and \system itself) after enhancing its index building component
(for higher scalability)
and also making it more extensible for diverse public cloud environments.

\section{Acknowledgement}

This work is supported in part by Microsoft Azure and Google Cloud Platform.

\bibliographystyle{IEEEtran}
\bibliography{bib/airphant}
%



\clearpage
\appendices

\section{Additional Visualizations}
\label{sec:apdx:viz}

Here we include additional visualizations that we trim in the main sections due to space limitation.

\subsection{Individual Latency Breakdown}
\label{sec:apdx:viz:individual_breakdown}

\cref{sec:experiment:query_breakdown} summarizes search latency breakdown averages. \cref{fig:data_transfer_scatter} instead visualizes latency breakdowns for individual search queries across systems. Most visibly perhaps is the fact that search latencies vary across queries due to many factors such as the result size, network variability, or buffer pool state. Nevertheless, the underlying search engine significantly dictates the breakdown. Overall, \system outperforms other systems by minimizing both wait time and download time simultaneously.

On a closer look, this scatter plot reveals two extreme access patterns. First, wait-heavy systems (e.g. Lucene) spend most of the search waiting for network responses. The most probable culprit is their dependent sequential reads, that is, reads whose locations depend on decisions in preceding reads. For example, skip list traversal requires the current node to find the next node to skip to; therefore, to know which block to read next, the skip list needs to complete reading the current node first. Second, download-heavy systems (e.g. HashTable) spends most of the search downloading their data. These systems minimize their number of roundtrips but consume more data. In particular, HashTable (i.e. IoU Sketch where $L=1$) contains a large number of false positives due to its limited filtering power. As a result, it unnecessarily reads a lot more false-positive documents but avoids roundtrips thanks to prefetching.

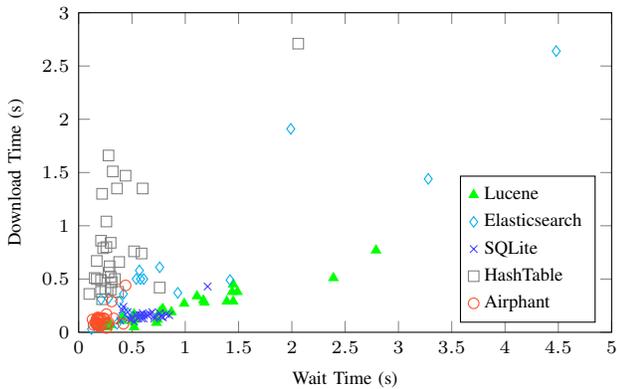
\begin{figure}[t]

\pgfplotstableread{
method low high
lucene 1.170 0.300
lucene 1.490 0.380
lucene 0.870 0.190
lucene 1.450 0.290
lucene 0.190 0.070
lucene 1.450 0.450
lucene 0.760 0.190
lucene 1.180 0.280
lucene 0.730 0.090
lucene 0.300 0.070
lucene 0.500 0.110
lucene 0.520 0.050
lucene 0.220 0.080
lucene 2.790 0.770
lucene 0.270 0.090
lucene 0.520 0.170
lucene 1.440 0.380
lucene 0.530 0.120
lucene 2.390 0.510
lucene 0.410 0.140
lucene 0.240 0.050
lucene 0.270 0.050
lucene 1.110 0.340
lucene 0.500 0.110
lucene 0.780 0.210
lucene 0.740 0.120
lucene 0.260 0.070
lucene 0.990 0.270
lucene 0.790 0.220
lucene 0.400 0.110
lucene 1.390 0.290
elasticsearch 3.280 1.440
elasticsearch 1.420 0.490
elasticsearch 0.410 0.150
elasticsearch 0.420 0.360
elasticsearch 0.610 0.500
elasticsearch 0.120 0.030
elasticsearch 0.380 0.290
elasticsearch 0.930 0.370
elasticsearch 0.280 0.320
elasticsearch 0.360 0.080
elasticsearch 1.990 1.910
elasticsearch 0.460 0.180
elasticsearch 0.570 0.580
elasticsearch 4.480 2.640
elasticsearch 0.150 0.120
elasticsearch 0.230 0.070
elasticsearch 0.210 0.310
elasticsearch 0.580 0.500
elasticsearch 0.540 0.500
elasticsearch 0.760 0.610
sqlite 0.780 0.150
sqlite 0.690 0.190
sqlite 0.710 0.150
sqlite 0.600 0.180
sqlite 0.490 0.120
sqlite 0.640 0.170
sqlite 0.560 0.130
sqlite 0.710 0.130
sqlite 0.850 0.160
sqlite 0.650 0.160
sqlite 0.770 0.150
sqlite 0.760 0.180
sqlite 0.570 0.160
sqlite 0.420 0.240
sqlite 0.610 0.130
sqlite 0.610 0.160
sqlite 0.460 0.190
sqlite 0.540 0.160
sqlite 1.210 0.430
sqlite 0.790 0.140
sqlite 0.590 0.160
sqlite 0.390 0.110
sqlite 0.540 0.160
sqlite 0.670 0.170
sqlite 0.520 0.100
sqlite 0.470 0.130
sqlite 0.580 0.160
sqlite 0.820 0.170
sqlite 0.500 0.140
sqlite 0.590 0.130
sqlite 0.410 0.210
hashtable 0.230 0.790
hashtable 0.310 0.470
hashtable 0.340 0.500
hashtable 0.590 0.740
hashtable 0.210 0.860
hashtable 0.360 1.350
hashtable 0.210 0.380
hashtable 0.380 0.660
hashtable 0.370 0.380
hashtable 0.600 1.350
hashtable 0.300 0.520
hashtable 2.060 2.710
hashtable 0.260 0.800
hashtable 0.220 0.310
hashtable 0.150 0.510
hashtable 0.250 0.410
hashtable 0.290 0.620
hashtable 0.260 1.040
hashtable 0.100 0.360
hashtable 0.440 1.470
hashtable 0.760 0.420
hashtable 0.170 0.670
hashtable 0.320 1.510
hashtable 0.280 1.660
hashtable 0.300 0.840
hashtable 0.210 0.490
hashtable 0.170 0.500
hashtable 0.520 0.760
hashtable 0.280 0.560
hashtable 0.220 1.300
hashtable 0.300 0.400
iou 0.130 0.120
iou 0.180 0.140
iou 0.180 0.110
iou 0.170 0.070
iou 0.170 0.130
iou 0.210 0.070
iou 0.140 0.070
iou 0.240 0.130
iou 0.180 0.080
iou 0.170 0.080
iou 0.190 0.130
iou 0.210 0.130
iou 0.310 0.290
iou 0.420 0.080
iou 0.160 0.090
iou 0.330 0.130
iou 0.200 0.130
iou 0.440 0.440
iou 0.210 0.060
iou 0.190 0.060
iou 0.260 0.040
iou 0.270 0.080
iou 0.280 0.090
iou 0.180 0.120
iou 0.180 0.090
iou 0.170 0.070
iou 0.230 0.130
iou 0.170 0.090
iou 0.180 0.080
iou 0.260 0.170
}\dataTransferTable

\centering
\begin{tikzpicture}
    \begin{axis}[
        scale only axis,
        width=0.8\linewidth,
        height=0.48\linewidth,
        xmin=0,
        xmax=5,
        ymin=0,
        ymax=3,
        xtick={0, 0.5, ..., 5},
        ytick={0, 0.5, ..., 3},
        xlabel near ticks,
        ylabel near ticks,
        xlabel=Wait Time (s),
        ylabel=Download Time (s),
        every axis/.append style={font=\scriptsize},
        legend pos=south east,
        legend style={
            font=\scriptsize,
            draw=black,
            fill=white
        },
        legend cell align={left},
        legend columns=1
    ]
        \addplot[
            scatter,
            only marks,
            scatter src=explicit symbolic,
            scatter/classes={
                lucene={mark=triangle*,green},
                elasticsearch={mark=diamond,cyan},
                sqlite={mark=x,blue!75},
                hashtable={mark=square,black!50},
                iou={mark=o,vintageorange}
            },
        ]
        table[x=low,y=high,meta=method]{\dataTransferTable};
        \legend{Lucene,Elasticsearch,SQLite,HashTable,Airphant}
    \end{axis}
\end{tikzpicture}
\vspace{-2mm}

\caption{Individual search latency breakdown in terms of network communication on the \texttt{Spark} dataset. Each scatter mark represents how long each individual search spends waiting and downloading. The total search latency is equal to the summation of wait time and download time. Faster searches reside in the lower left corner.}
\vspace{-2mm}

\label{fig:data_transfer_scatter}
\end{figure}





\subsection{Full Cross-Region Results}
\label{sec:apdx:viz:full_cross_region}

\cref{sec:experiment:cross_region} shows search latency trends as we move further away from the cloud storage; however, it only selects those measurements based on \texttt{Windows} dataset as the representative. \cref{fig:e2e_search_eu} and \cref{fig:e2e_search_asia} fully display the measurements across all 7 datasets.

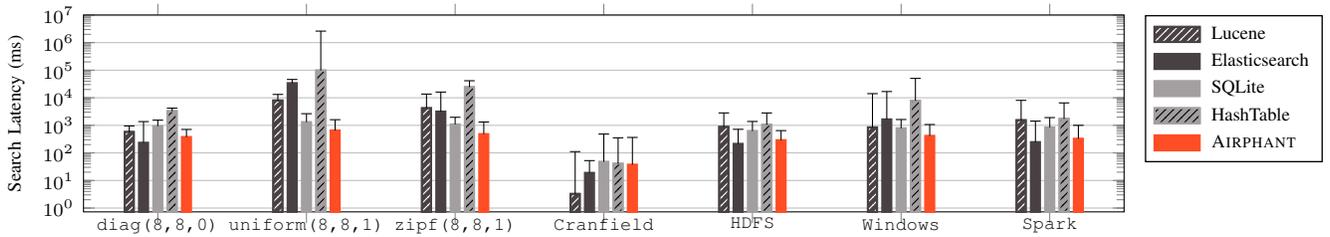
\begin{figure*}[t]
  \centering

\pgfplotsset{
    accfig/.style={
        width=0.85\linewidth,
        height=42mm,
        ybar,
        bar width=1.2mm,
        xmin=0.5,
        xmax=7.5,
        xtick={1, 2, ..., 7},
        xticklabels={\texttt{diag(8,8,0)}, \texttt{uniform(8,8,1)}, \texttt{zipf(8,8,1)}, \texttt{Cranfield}, \texttt{HDFS}, \texttt{Windows}, \texttt{Spark}},
        ylabel near ticks,
        ylabel style={align=center},
        ymax=10000000,
        ytick={1, 10, 100, 1000, 10000, 100000, 1000000, 10000000},
        xticklabel style={yshift=2mm},
        xlabel style={yshift=0mm},
        legend style={
            at={(1.02,1.0)},anchor=north west,column sep=2pt,
            draw=black,fill=white,line width=.5pt,
            /tikz/every even column/.append style={column sep=5pt}
        },
        legend cell align={left},
        legend columns=1,
        area legend,
        clip=false,
        every axis/.append style={font=\scriptsize},
        /pgfplots/ymode=log,
        /pgfplots/log origin=infty,
        minor grid style=lightgray,
        ymajorgrids,
        },
    accfig/.belongs to family=/pgfplots/scale,
}

    \begin{tikzpicture}
    \begin{axis}[accfig,
        ylabel=Search Latency (ms)]
    
    \addplot[thick,draw=vintageblack,fill=vintageblack,
    postaction={
        pattern=north east lines,
        pattern color=white,
    }]
    plot [error bars/.cd, y dir=plus,y explicit,error bar style={color=black}]
    table[x=x,y=y,y error expr=\thisrowno{3}-\thisrowno{2}] {
    x c y    y+
    1 0 572.8089254 951.222407
    2 0 7957.058236 13205.43993
    3 0 4168.483895 13606.32159
    4 0 3.241360334 111.46959
    5 0 875.1827639 2820.877001
    6 0 839.7534547 14117.8497
    7 0 1527.858987 8065.231209
    };
    
    \addplot[thick,draw=vintageblack,fill=vintageblack]
    plot [error bars/.cd, y dir=plus,y explicit,error bar style={color=black}]
    table[x=x,y=y,y error expr=\thisrowno{3}-\thisrowno{2}] {
    x c y    y+
    1 1 233.7850197 1366.522266
    2 1 33832.12281 46119.97659
    3 1 3159.419616 16110.53049
    4 1 18.86058253 52.511735
    5 1 212.843206 718.343901
    6 1 1643.60251  16817.4931
    7 1 245.6759188 1426.898492
    };
    
    \addplot[thick,draw=vintageblack!50,fill=vintageblack!50,]
    plot [error bars/.cd, y dir=plus,y explicit,error bar style={color=black}]
    table[x=x,y=y,y error expr=\thisrowno{3}-\thisrowno{2}] {
    x c y    y+
    1 2 932.7794291 1545.059984
    2 2 1305.418526 2643.745226
    3 2 1081.958393 1960.346511
    4 2 47.79461557 483.367473
    5 2 624.0733924 1381.617071
    6 2 782.4309524 1628.206049
    7 2 839.3777734 1908.36181
    };
    
    \addplot[thick,draw=vintageblack!50,fill=vintageblack!50,
    postaction={
        pattern=north east lines,
        pattern color=black,
    }]
    plot [error bars/.cd, y dir=plus,y explicit,error bar style={color=black}]
    table[x=x,y=y,y error expr=\thisrowno{3}-\thisrowno{2}] {
    x c y    y+
    1 3 3255.952346 4205.011176
    2 3 97557.41994 2612286.755
    3 3 24311.92782 41411.92118
    4 3 40.43526768 352.028549
    5 3 1052.742456 2809.276402
    6 3 7817.122712 50454.55853
    7 3 1718.490775 6518.456685
    };
    
    \addplot[thick,draw=vintageorange,fill=vintageorange,]
    plot [error bars/.cd, y dir=plus,y explicit,error bar style={color=black}]
    table[x=x,y=y,y error expr=\thisrowno{3}-\thisrowno{2}] {
    x c y    y+
    1 4 376.8702573 710.367971
    2 4 651.4843373 1602.690007
    3 4 479.9272433 1321.765812
    4 4 37.4948442  363.950979
    5 4 289.1142445 643.895592
    6 4 414.801396  1070.681104
    7 4 326.5863024 1002.033521
    };

    \addlegendentry{Lucene}
    \addlegendentry{Elasticsearch}
    \addlegendentry{SQLite}
    \addlegendentry{HashTable}
    \addlegendentry{\system}
      
    \end{axis}
    \end{tikzpicture}
    

  \vspace{-2mm}
  \caption{End-to-end search latencies of indexes on different datasets from \textit{Europe (London)}. Solid bars show average latencies while the upper error bars show 99th percentiles of measured latencies. Notice the logarithmic scale on the y-axis.}
  
  \label{fig:e2e_search_eu}
  \vspace{-2mm}
\end{figure*}

\begin{figure*}[t]
  \centering

\pgfplotsset{
    accfig/.style={
        width=0.85\linewidth,
        height=42mm,
        ybar,
        bar width=1.2mm,
        xmin=0.5,
        xmax=7.5,
        xtick={1, 2, ..., 7},
        xticklabels={\texttt{diag(8,8,0)}, \texttt{uniform(8,8,1)}, \texttt{zipf(8,8,1)}, \texttt{Cranfield}, \texttt{HDFS}, \texttt{Windows}, \texttt{Spark}},
        ylabel near ticks,
        ylabel style={align=center},
        ymax=10000000,
        ytick={1, 10, 100, 1000, 10000, 100000, 1000000, 10000000},
        xticklabel style={yshift=2mm},
        xlabel style={yshift=0mm},
        legend style={
            at={(1.02,1.0)},anchor=north west,column sep=2pt,
            draw=black,fill=white,line width=.5pt,
            /tikz/every even column/.append style={column sep=5pt}
        },
        legend cell align={left},
        legend columns=1,
        area legend,
        clip=false,
        every axis/.append style={font=\scriptsize},
        /pgfplots/ymode=log,
        /pgfplots/log origin=infty,
        minor grid style=lightgray,
        ymajorgrids,
        },
    accfig/.belongs to family=/pgfplots/scale,
}

    \begin{tikzpicture}
    \begin{axis}[accfig,
        ylabel=Search Latency (ms)]
    
    \addplot[thick,draw=vintageblack,fill=vintageblack,
    postaction={
        pattern=north east lines,
        pattern color=white,
    }]
    plot [error bars/.cd, y dir=plus,y explicit,error bar style={color=black}]
    table[x=x,y=y,y error expr=\thisrowno{3}-\thisrowno{2}] {
    x c y    y+
    1 0 1321.891674 2685.048415
    2 0 16061.12763 27265.46487
    3 0 9531.471045 29269.46282
    4 0 6.796703632 239.822189
    5 0 2223.104264 7550.617328
    6 0 2099.758349 37431.87928
    7 0 3258.418332 16604.84533
    };
    
    \addplot[thick,draw=vintageblack,fill=vintageblack]
    plot [error bars/.cd, y dir=plus,y explicit,error bar style={color=black}]
    table[x=x,y=y,y error expr=\thisrowno{3}-\thisrowno{2}] {
    x c y    y+
    1 1 546.5974926 4683.444322
    2 1 34067.92064 46758.9028
    3 1 3530.732237 18530.97371
    4 1 23.48080481 48.077612
    5 1 469.6573491 1904.824206
    6 1 2128.250344 14688.09089
    7 1 557.0473308 4586.264348
    };
    
    \addplot[thick,draw=vintageblack!50,fill=vintageblack!50,]
    plot [error bars/.cd, y dir=plus,y explicit,error bar style={color=black}]
    table[x=x,y=y,y error expr=\thisrowno{3}-\thisrowno{2}] {
    x c y    y+
    1 2 2124.899295 3385.997038
    2 2 3346.6223 5755.281672
    3 2 2448.838881 4614.796388
    4 2 87.70456329 930.841873
    5 2 1499.285923 3238.860338
    6 2 1943.175542 4208.20374
    7 2 2063.905187 5657.226302
    };
    
    \addplot[thick,draw=vintageblack!50,fill=vintageblack!50,
    postaction={
        pattern=north east lines,
        pattern color=black,
    }]
    plot [error bars/.cd, y dir=plus,y explicit,error bar style={color=black}]
    table[x=x,y=y,y error expr=\thisrowno{3}-\thisrowno{2}] {
    x c y    y+
    1 3 9462.451996 9747.283055
    2 3 119265.1996 907486.4017
    3 3 40645.01975 103011.6883
    4 3 78.9855411  712.792245
    5 3 2558.572518 5989.698391
    6 3 10545.37325 115839.6961
    7 3 4721.576236 16153.8763
    };
    
    \addplot[thick,draw=vintageorange,fill=vintageorange,]
    plot [error bars/.cd, y dir=plus,y explicit,error bar style={color=black}]
    table[x=x,y=y,y error expr=\thisrowno{3}-\thisrowno{2}] {
    x c y    y+
    1 4 730.7731417 1447.299905
    2 4 1695.563014 3107.065969
    3 4 1301.547099 3324.336061
    4 4 73.33215171 708.368106
    5 4 773.4802533 1459.609839
    6 4 1132.139502 2783.074737
    7 4 956.1917029 2479.011535
    };

    \addlegendentry{Lucene}
    \addlegendentry{Elasticsearch}
    \addlegendentry{SQLite}
    \addlegendentry{HashTable}
    \addlegendentry{\system}
      
    \end{axis}
    \end{tikzpicture}
    

  \vspace{-2mm}
  \caption{End-to-end search latencies of indexes on different datasets from \textit{Asia (Singapore)}. Solid bars show average latencies while the upper error bars show 99th percentiles of measured latencies.  Notice the logarithmic scale on the y-axis.}
  
  \label{fig:e2e_search_asia}
  \vspace{-2mm}
\end{figure*}
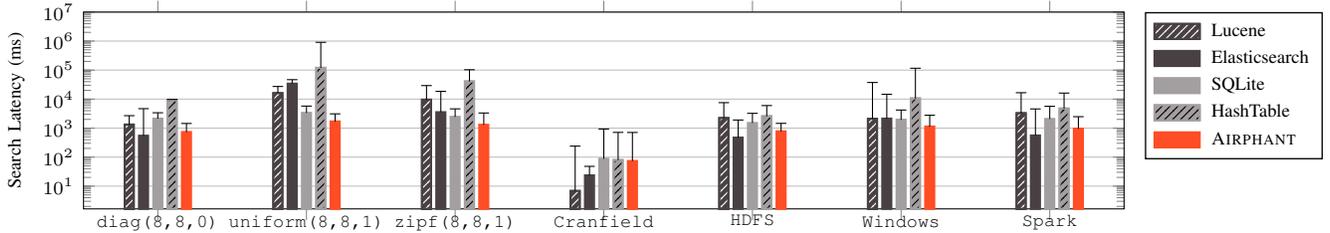

\section{Additional Results}
\label{sec:apdx:results}

In contrary to the main results, we collect the following results from March to April 2021. Some measurements might disagree with those in earlier sections.

\subsection{Term Index Lookup Performance}
\label{sec:apdx:results:term-lookup}

How does \system achieve faster latencies? Recall that \system and SQLite share the same document retrieval routine, so the difference in their latencies comes from the differences in both term index lookup operations and amounts of retrieved documents. Although \system would need to retrieve more documents to take into account the false positives, its lookup speed gain justifies additional costs. \cref{fig:e2e_lookup} reiterates the importance of \system's single-round-trip lookup operation. It evidently outperforms SQLite's cached B-tree traversal both on average and at tail latency. In the best case, \system is upto $2.79\times$ faster on average and $2.81\times$ faster at 99th percentile of term index lookup than SQLite.

\begin{figure}[t]
  \centering

    \begin{tikzpicture}
    \begin{axis}[
        width=1.0\linewidth,
        height=38mm,
        ybar,
        bar width=0.015\linewidth,
        xmin=0.5,
        xmax=7.5,
        xtick={1, 2, ..., 7},
        xticklabels={\texttt{diag(8,8,0)}, \texttt{uniform(8,8,1)}, \texttt{zipf(8,8,1)}, \texttt{Cranfield}, \texttt{HDFS}, \texttt{Windows}, \texttt{Spark}},
        xticklabel style={rotate=18},
        ylabel near ticks,
        ylabel style={align=center},
        ymin=0,
        ymax=1000,
        ytick={0, 100, ..., 1000},
        xticklabel style={yshift=2mm},
        xlabel style={yshift=0mm},
        legend style={
            at={(0.99,0.97)},anchor=north east,column sep=2pt,
            draw=black,fill=white,line width=.5pt,
            /tikz/every even column/.append style={column sep=5pt}
        },
        legend cell align={left},
        legend columns=3,
        area legend,
        clip=false,
        every axis/.append style={font=\footnotesize},
        minor grid style=lightgray,
        ymajorgrids,
        ylabel=Lookup Latency (ms)]
    
    \addplot[thick,draw=vintageblack,fill=vintageblack,
    postaction={
        pattern=north east lines,
        pattern color=white,
    }]
    plot [error bars/.cd, y dir=plus,y explicit,error bar style={color=black}]
    table[x=x,y=y,y error expr=\thisrowno{3}-\thisrowno{2}] {
    x c y    y+
    1 0 357.400546  859.454353
    2 0 410.618882  992.836816
    3 0 396.021877  800.094758
    4 0 10.795443 133.193405
    5 0 271.673016  657.487173
    6 0 237.781243  649.503443
    7 0 257.986436  631.786886
    };
    
    
    \addplot[thick,draw=vintageorange,fill=vintageorange,]
    plot [error bars/.cd, y dir=plus,y explicit,error bar style={color=black}]
    table[x=x,y=y,y error expr=\thisrowno{3}-\thisrowno{2}] {
    x c y    y+
    1 3 127.948799  305.792355
    2 3 192.273769  394.815169
    3 3 161.941161  333.084021
    4 3 42.577726 117.69961
    5 3 119.165405  340.66282
    6 3 168.5515  360.545509
    7 3 123.973529  327.130572
    };
    
    
    
    \addlegendentry{SQLite}
    \addlegendentry{\system}
      
    \end{axis}
    \end{tikzpicture}
    
    \vspace{-2mm}

  \caption{Term index lookup latencies of indexes on different datasets. Solid bars show average latencies while the upper error bars show 99th percentiles of measured latencies.}
  
  \label{fig:e2e_lookup}
\end{figure}
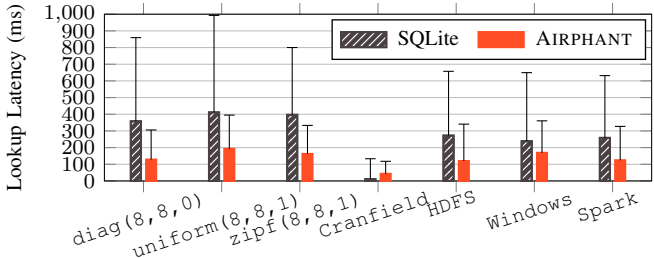

\subsection{Scalability with Corpus Size}
\label{sec:apdx:results:scalability}

We conduct a scalability experiment where we vary the size of synthetic datasets. In particular, we generate $17$ datasets from \texttt{diag}, \texttt{unif}, and \texttt{zipf} by varying both numbers of documents $n_d$ and distinct words $n_w$ ranging from $10^3$ to $10^8$. \cref{fig:scalability} summarizes the measurement results of search latency and the index storage usage on \texttt{zipf} datasets. Results from \texttt{diag} and \texttt{unif} datasets possess the same pattern. It confirms that when the corpus is small, the baselines are faster, suggesting a room for \system's improvement in more aggressive caching policy. As the size of corpus increases, \system relatively outperforms more and more across all synthetic datasets. The performance at the high end is summarized in previous section \cref{sec:experiment:e2e}. In terms of storage size, \system generally allocates more space comparing to both SQLite and Lucene but all follow the same trend in a logarithmic scale. In the worst setting, \system uses $2.85\times$ more storage than Lucene.

\input{figures/exp_scalability_2}

\subsection{Tiny IoU Sketch Structure}
\label{sec:apdx:results:tiny_structure}

Apart from \cref{sec:experiment:trafeoff-iou}, we also use \texttt{Cranfield} to explore a restrictive range of $B \in \{1000, 1500, \dots, 3000 \}$ and a excessively wide one for $L \in \{1, 2, 4, 6, 8, \dots, 16 \}$. Recall that $L = 1$ corresponds to the na\"ive hash table index structure. \cref{fig:tradeoff_small} shows four aspects of measurement. The false positive averages (\cref{fig:tradeoff_small:false-positives}) confirm our formulation: for a fixed $B$, there exists some $L^{*}$ that minimizes the error. As $B$ increases, we see the false positive averages decrease across all $L$. Such high numbers of false positives result in a higher search latency (\cref{fig:tradeoff_small:search-latency}) observed near the two ends of dotted line. Although the search latencies across different $B$ ($B \geq 1500 $) are similar due to stochasticity, higher settings of $B$ generally results in a faster search. It should come as no surprise that spending more resource (memory in this case) results in a better performance.

The number of layers $L$ has the most impact on the storage usage because it roughly determines how many bins each posting would belong to. In the worst case where no common postings fall into the same bin, the storage usage is a linear function of $L$; however, \cref{fig:tradeoff_small:storage} reveals a sublinear relationship, especially in small $B$. This is due to a high chance of word's hash collision, inducing more intersection of postings. Developing a hash function that promotes such intersection to save space while controlling false positives is a promising future direction. Moreover, $L$ also approximately linearly affects term lookup latencies as presented in \cref{fig:tradeoff_small:lookup-latency}. Thanks to concurrent network communication, lookup latency is substantially smaller than a multiple of $L$; for example, lookup latency for $L = 16$ is much less than $16\times$ of that for $L = 1$.

\input{figures/exp_iou_tradeoff_small}

\subsection{Tighter Accuracy Requirement}
\label{sec:apdx:results:accuracy}

We also study the relationship between expected false positive parameter $F_0$ on optimal number of layers, and ultimately, search and term index lookup latencies. We select $F_0 \in \{1.0, 0.01, 0.0001\}$ using $B = 10^5$ bins. Then, we observe the optimal numbers of layers $L^{*}$, search latency, and term index lookup latency. Despite differences in magnitude from these setting, the resulting optimal number of layers $L^{*}$ increases only slightly (\cref{fig:exp:acc_tradeoff:layers}). This is consistent with but even stronger (in terms of exponentiation base) than the earlier upper bound analysis that the expected number of false positives is exponentially decreasing at $O(2^{-L})$. As we have seen earlier, a small difference in number of layers with fixed number of bins reflects in a slight difference in both search and term index lookup latencies. Overall, \cref{fig:exp:acc_tradeoff:search} displays slight increases in latencies when accuracy parameter increases. Those observations that do not follow the trend do so because of variability in cloud storage.

\begin{figure*}[t]
  \centering

\begin{subfigure}[b]{0.45\linewidth}
    \begin{tikzpicture}
    \begin{axis}[
        width=1.0\linewidth,
        height=40mm,
        ybar,
        bar width=0.015\linewidth,
        xmin=0.5,
        xmax=7.5,
        xtick={1, 2, ..., 7},
        xticklabels={},
        ylabel near ticks,
        ylabel style={align=center},
        ymin=0,
        ymax=4.5,
        ytick={0, 1, ..., 4},
        xticklabel style={yshift=2mm},
        xlabel style={yshift=0mm},
        clip=false,
        every axis/.append style={font=\footnotesize},
        minor grid style=lightgray,
        ymajorgrids,
        legend style={
            at={(1.2,1.15)},anchor=south,column sep=2pt,
            draw=black,fill=white,line width=.5pt,
            /tikz/every even column/.append style={column sep=5pt}
        },
        legend cell align={left},
        legend columns=3,
        area legend,
        ylabel=Number of Layers]
    
    \addplot[thick,draw=vintageblack,fill=vintageblack,]
    table[x=x,y=y] {
    x c y    y+
    1	3	2	2558700865
    2	3	3	38867173082
    3	3	3	23842342365
    4	3	2	1727265
    5	3	2	2319845896
    6	3	3	28245668027
    7	3	3	6326309018
    };
    
    \addplot[thick,draw=vintageblack!50,fill=vintageblack!50,]
    table[x=x,y=y] {
    x c y    y+
    1	2	3	3836296334
    2	2	3	38867173082
    3	2	3	23842342365
    4	2	2	1727265
    5	2	3	2748876282
    6	2	4	30572148088
    7	2	3	6326309018
    };
    
    \addplot[thick,draw=vintageblack,fill=white,]
    table[x=x,y=y] {
    x c y    y+
    1	1	3	3836296334
    2	1	4	51819162833
    3	1	4	29606748023
    4	1	3	1998269
    5	1	4	3177900683
    6	1	4	30572148089
    7	1	4	7184161021
    };
    
    \addlegendentry{$F_0 = 1.0$}
    \addlegendentry{$F_0 = 0.01$}
    \addlegendentry{$F_0 = 0.0001$}

    \end{axis}
    \end{tikzpicture}
    
    \vspace{-3mm}
    \caption{Optimal Number of Layers $L^{*}$}
    \label{fig:exp:acc_tradeoff:layers}
\end{subfigure}
~
\begin{subfigure}[b]{0.45\linewidth} 
    \begin{tikzpicture}
    \begin{axis}[
        width=1.0\linewidth,
        height=40mm,
        ybar,
        bar width=0.015\linewidth,
        xmin=0.5,
        xmax=7.5,
        xtick={1, 2, ..., 7},
        xticklabels={},
        ylabel near ticks,
        ylabel style={align=center},
        ymin=0,
        ymax=1800,
        ytick={0, 400, ..., 1600},
        xticklabel style={yshift=2mm},
        xlabel style={yshift=0mm},
        legend style={
            at={(0.46,1.2)},anchor=south,column sep=2pt,
            draw=black,fill=white,line width=.5pt,
            /tikz/every even column/.append style={column sep=5pt}
        },
        legend cell align={left},
        legend columns=3,
        area legend,
        clip=false,
        every axis/.append style={font=\footnotesize},
        minor tick num=1,
        minor grid style=lightgray,
        ymajorgrids,
        yminorgrids,
        ylabel=Search Latency (ms)]
    
    \addplot[thick,draw=vintageblack,fill=vintageblack,]
    plot [error bars/.cd, y dir=plus,y explicit,error bar style={color=black}]
    table[x=x,y=y,y error expr=\thisrowno{3}-\thisrowno{2}] {
    x c y    y+
    1	3	234.2804641	494.399881
    2	3	513.1455453	996.321521
    3	3	331.6545002	874.795758
    4	3	128.1804687	245.284867
    5	3	225.6528376	491.898939
    6	3	291.8882793	620.578197
    7	3	256.1637277	680.811308
    };
    
    \addplot[thick,draw=vintageblack!50,fill=vintageblack!50,]
    plot [error bars/.cd, y dir=plus,y explicit,error bar style={color=black}]
    table[x=x,y=y,y error expr=\thisrowno{3}-\thisrowno{2}] {
    x c y    y+
    1	2	254.2812522	527.467323
    2	2	663.4064984	1284.404663
    3	2	441.5480138	1006.050336
    4	2	152.6476241	448.425081
    5	2	342.173046	700.901926
    6	2	404.3887058	993.233446
    7	2	407.2462167	1153.824425
    };
    
    \addplot[thick,draw=vintageblack,fill=white,]
    plot [error bars/.cd, y dir=plus,y explicit,error bar style={color=black}]
    table[x=x,y=y,y error expr=\thisrowno{3}-\thisrowno{2}] {
    x c y    y+
    1	1	274.5159123	621.050755
    2	1	637.6762959	1084.87594
    3	1	458.2496258	1573.012455
    4	1	184.7343866	1594.48116
    5	1	265.0517705	530.12317
    6	1	332.4198357	713.327716
    7	1	315.8083193	750.085184
    };
    
    
      
    \end{axis}
    \end{tikzpicture}
    
    \vspace{-3mm}
    \caption{Search Latency}
    \label{fig:exp:acc_tradeoff:search}
\end{subfigure}

  \caption{Latencies of \system with different accuracy constraint configurations. Solid bars show average latencies while the upper error bars show 99th percentiles of measured latencies.}
  
  
  \label{fig:acc_tradeoff}
  
  \vspace{-4mm}
\end{figure*}
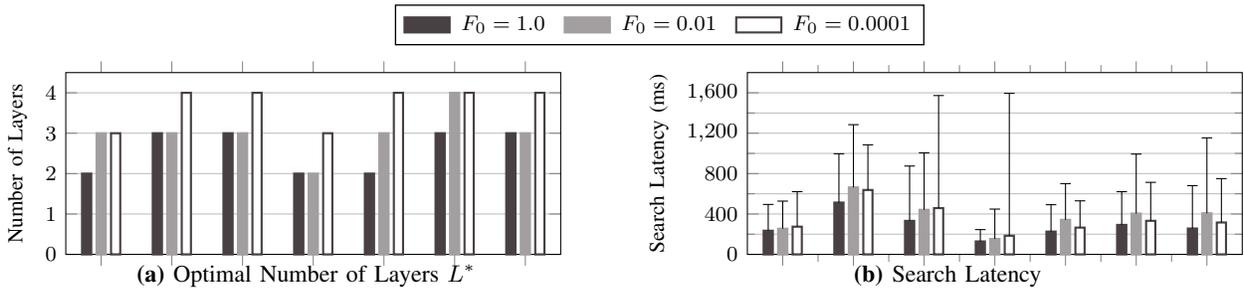

\end{document}